\def\verbatim@font{\linespread{1}\normalfont\ttfamily}
\newtheorem{theorem}{Theorem}
\newtheorem{proposition}{Proposition}
\newtheorem{theorem*}{Theorem}
\newtheorem{proposition*}{Proposition}
\newtheorem{corollary*}{Corollary}
\newtheorem{lemma}{Lemma}
\newtheorem{remark}{Remark}
\title{Empirical Bayes Multistage Testing for Large-Scale Experiments}%
\author{Hui Xu, Weinan Wang}
\begin{document}

\maketitle

\pagenumbering{gobble}

\begin{abstract}
Modern application of A/B tests is challenging due to its large scale in various dimensions, which demands flexibility to deal with multiple testing sequentially. The state-of-the-art practice first reduces the observed data stream to always-valid p-values, and then chooses a cut-off as in conventional multiple testing schemes. Here we propose an alternative method called AMSET (adaptive multistage empirical Bayes test) by incorporating historical data in decision-making to achieve efficiency gains while retaining marginal false discovery rate (mFDR) control that is immune to peeking. We also show that a fully data-driven estimation in AMSET performs robustly to various simulation and real data settings at a large mobile app social network company.

\medskip 

\noindent {\it Keywords:} Always valid inference; Multiple testing; Sequential A/B testing
\end{abstract}


\singlespacing
\clearpage
\pagenumbering{arabic}

\section{Introduction}
Large-scale online experiments \citep{Box05,Ger12} are routinely conducted at various technology companies (Google, Netflix, Microsoft, LinkedIn, etc.) \citep{Xie16,Koh09,kohavi2013online, Tang10,Bak14,Xu15} to help gauge quantifiable impacts of potential feature and product launches. At a large mobile app social network company, hundreds of experiments are conducted on a daily basis, aiming to improve user engagement or app-performance. To accurately measure these changes, hundreds and thousands of metrics are further pumped through our in house A/B platform where routine statistical tests are conducted to judge if the treatment effects are indeed statistically significant. 

Given such high velocity in experiments, we are able to gather a rich repository of historical data related to metric sensitivities and treatment effects, which could potentially be team and product specific. However, traditional $p$-value based decision making that's calculated via two-sample $t$-tests fail to leverage such data, since the test statistic relies only on a single experiment's single metric data. This inefficiency not only translates to loss in statistical power \citep{sun2007oracle} in treatment effect detection, but also limits the type of decision making available to us that is otherwise possible from an empirical Bayesian perspective.

Furthermore, although best practices inform experiment practitioners to hold off until the end to examine experiment efficacy, oftentimes people "hunt" for significance ($p$-value$<0.05$) throughout the whole experiment duration. The incentive for continuously monitoring experimental results is strong since detecting true effects and giving up null effects as early as possible reduces the opportunity cost from running longer experiments, especially when estimating optimal run-time in advance is difficult. However, if no corrections are made, continuous monitoring would augment the type I error incurred, as demonstrated in \citep{Jo15}. Due to invalid $p$-values under continuous monitoring, false discovery rate (FDR) control is also compromised. 

Combining these two pivotal needs, efficient leverage of historical data and continuous monitoring without the pain of type I error inflation, we propose an empirical Bayes procedure named {\textbf{AMSET}} (adaptive multistage empirical Bayes test), together with a fully data-driven estimation procedure that is shown to perform robustly to various simulation and real data settings. Comparing with the state-of-the-art method of \textbf{always valid p-value} utilized by Optimizely \citep{johari2021always}, our method AMSET can control mFDR level at arbitrary stopping time with better precision, and also shows significant power gains.

\subsection{Main contributions}
Our main contribution is two-folds. First, in the oracle situation where population level parameters are known, we derived a Bayes optimal decision rule based on a weighted classification problem and showed that it minimizes the number of missed discoveries subject to marginal false discovery rate (mFDR) control under mild regularity conditions at any time stage. Our second contribution is that we developed an adaptive multistage empirical Bayes (AMSET) procedure by approximating the Bayes optimal decision rule with compound thresholding together with a fully data-driven estimation scheme. We proved that the oracle AMSET gives valid marginal false discovery rate (mFDR) control at any stopping time, thus allowing continuous monitoring of A/B test platforms. In addition, we demonstrated significant power gain of AMSET at mFDR level $\alpha$ as compared to the state-of-the-art always valid p-value approach in various simulation and real data scenarios for both the oracle and data-driven settings.

\subsection{Organization of the paper}
The remainder of the paper is organized as follows. In section 2, we present related literature. In section 3, we describe our setup and notations. In section 4, we establish optimality of the chosen test statistic in fixed horizon setting and introduce our proposed AMSET procedure with known population level parameters, which gives valid mFDR control at any stopping time. In section 5, we describe a fully data-driven procedure that estimates population level parameters using empirical Bayes methods. In section 6, we do a brief recap of the state-of-the-art method of always-valid p-value and clarify connections to our test statistics. In section 7, we compare our method in both the oracle and data-driven scenario with the method used by Optimzely in simulations of fixed-horizon and stagewise testing. In section 8, we demonstrate our method in a real data example using data from large scale A/B testing platform.  

\section{Background and Related Work}
Multiple hypothesis testing can be viewed as a compound decision problem \cite{robbins1951asymptotically} in the sequential context, where each component is testing of a single hypothesis based on sequentially obtained observations. 
\subsection{Sequential hypothesis testing}
Sequential A/B test framework was initially formulated by Wald in 1945 and was widely used in pharmaceutical clinical trials. It is known that the sequential probability ratio test (SPRT) is optimal (\cite{wald1945sequential}) for the single simple vs simple hypothesis testing problem, such that it minimizes the expected number of observations among all sequential procedures with given levels of type I and type II errors. For a comprehensive understanding of related theory, see \cite{berger2013statistical} and \cite{siegmund2013sequential} for details. Adoption of existing sequential tests in online A/B test platforms is limited since validity for arbitrary stopping time is not guaranteed. 

There has been recent interest in constructing valid sequential tests that allow for continuous monitoring. In \cite{Joha17}, \cite{johari2021always}, the authors proposed a general framework of always valid p-values and suggested construction using a particular family of sequential tests called mixture sequential probability ratio tests (mSPRT) \cite{robbins1970statistical}, which are test of power one \cite{robbins1974expected} and is asymptotically optimal if the cost per experiments diminishes to zero \cite{lai2001sequential}. In \cite{wasserman2020universal}, split likelihood-ratio test (split LRT) was introduced and extended to yield always valid p-values and confidence sequences. 

It is worth mentioning that there are other goals besides any time validity that are of interest in sequential testing. For example, constructing sequences of intervals that contain an unknown parameter with uniformly high
probability over an infinite time horizon with Law of the Iterated Logarithm (\cite{darling1967confidence}, \cite{balsubramani2015sequential}), and using sequential A/B testing to find the best alternative with multi-armed bandits (\cite{bubeck2012regret}, \cite{lattimore2020bandit}, \cite{scott2015multi}).

\subsection{Multiple hypothesis testing}
Starting from multiple testing in the single-stage design, the standard procedure to control false discovery rate (FDR) is Benjamini-Hochberg (\cite{benjamini1995controlling}), abbreviated as BH. Sun and Cai proposed an adaptive z-value based procedure \cite{sun2007oracle} that is optimal in the sense that it controls mFDR at level $\alpha$ with the smallest mFNR, where mFNR is the proportion of the expected number of nonnulls among the expected number of nonrejections. 

Due to the feasibility of conducting large-scale online experiments on a routine basis, there has also been recent interest in achieving multiple hypothesis testing controls in the sequential contexts. \cite{wang2017sparse} built on \cite{sun2007oracle} and constructed a simultaneous multistage adaptive ranking and thresholding procedure that mimics the optimality of SPRT, i.e. minimizes the expected total number of samples while satisfying pre-specified constraints on both false positive rate (FPR) and missed discovery rate (MDR). Similarly as SPRT, the procedure is not compatible with continuous monitoring. 

The only state-of-the-art method that provides always valid FDR control in sequential multiple hypothesis testing is to combine BH procedure with always valid p-values (\cite{Joha17}, \cite{johari2021always}). While there are other efforts to obtain always valid multiple testing control in the sequential context, works in this area consider a different data stream from the one we focus on in this paper. For example, a class of methods called ``$\alpha$-spending" and ``$\alpha$-investing" was used to control the family wise error rate (FWER) or the false discovery rate (FDR), where a different hypothesis is evaluated at each time stage (\cite{demets1994interim}, \cite{foster2008alpha}, \cite{javanmard2018online}). There are also works that provide alternative multi-armed bandit (MAB) frameworks for controlling false discoveries when multiple experiments are run over time. In \cite{yang2017framework}, sequence of fixed A/B tests are replaced with best-arm MAB instances. In \cite{jamieson2018bandit}, traffic is adaptively assigned to treatment with the aim of minimizing sample size to reach target true positive proportion, while ensuring anytime false discovery control. Besides, there are also works that focus on combining continuous monitoring with other metrics beyond FDR (\cite{shi2022dynamic}, \cite{malek2017sequential}, \cite{shi2021online}, \cite{gronau2019informed}).

In sequential testing, it may not be necessary to measure all coordinates for all time stage, especially when running long experiments are costly. Adaptive sampling can be used in the sense that the data sampling procedure varies across coordinates. A plethora of powerful multistage testing and estimation have been developed under this more flexible data sampling regime, such as hierarchical testing procedures for pattern recognition (\cite{blanchard2005hierarchical}, \cite{meinshausen2009efficient}, \cite{sun2015hierarchical}), distilled sensing and sequential thresholding methods for sparse detection (\cite{haupt2011distilled}), \cite{malloy2011sequential}, \cite{malloy2014sequential}), etc. $\alpha-$investing can be considered as an extreme case of adaptive sampling where each hypothesis is considered once. In this paper, we will focus on a particular type of adaptive sampling where the set of hypothesis to sample from decreases or filters down at each time stage. 

\section{Setup and Notations}
\label{setup}
Consider the well established problem (\citep{sun2007oracle,jin2007estimating, wang2017sparse,tony2019covariate}) of recovering the support of a high dimensional vector  $\boldsymbol{\theta}= (\theta_1,\ldots, \theta_m) \in \{0,1\}^m$ from i.i.d. sequential measurements of variables $X_1,\ldots, X_m$. At each time stage $t = 1,2,\ldots$, we observe an independent measurement of each variable,
\begin{equation}
\label{model}
   X_{it} \sim (1-\theta_i)F_0 + \theta_i F_{i}, 
\end{equation}
for $i = 1, \ldots, m$. That is to say, $X_{it}$ follows null distribution $F_0$ if $\theta_i = 0$ and the non-null distribution $F_{i}$ if $\theta_i = 1$. In the context of large-scale A/B testing, $F_0$ is the distribution of two-sample t-statistics with mean zero approximated by standard normal and $F_i$'s are approximately normal distributions with different means corresponding to effect sizes.

\begin{figure*}[!h]
    \centering
    \includegraphics[width=\linewidth]{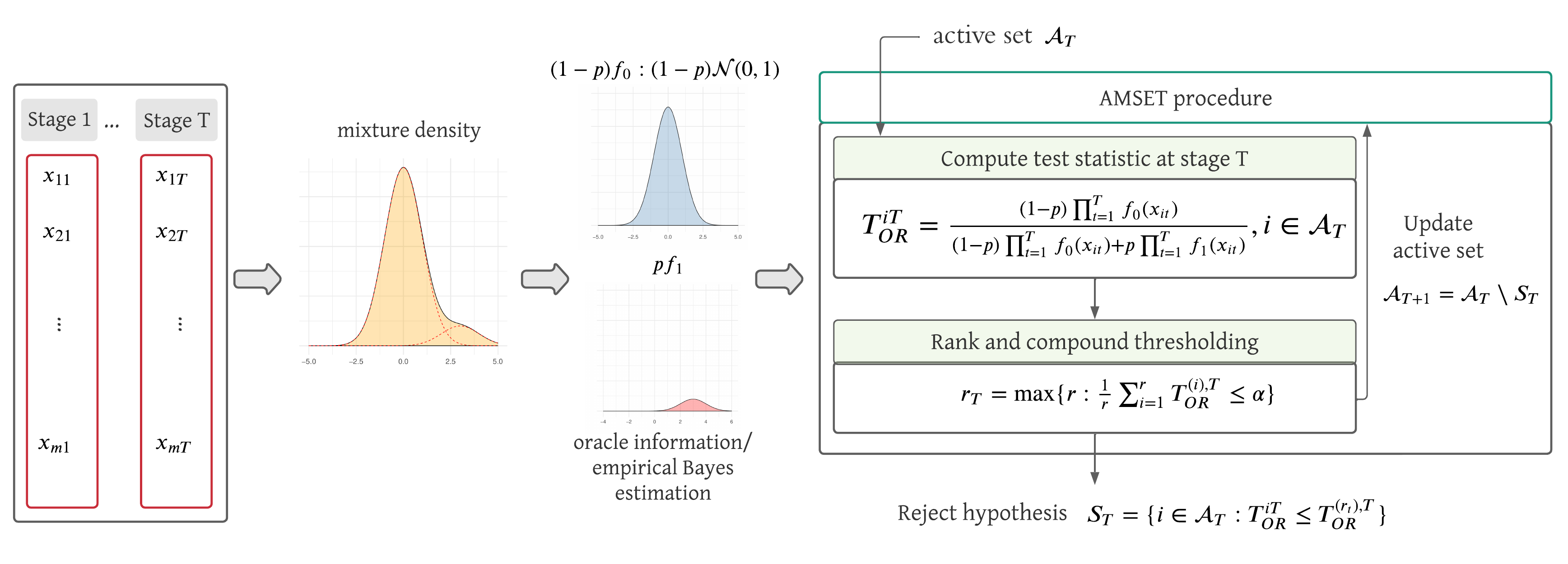}
    \caption{Illustration of AMSET procedure. The left part of the diagram corresponds to multistage data and mixture deconvolution obtained either by oracle or data-driven estimations. In real settings, alternative density $f_1$ could be Gaussian mixtures.}
    \label{illustration}
\end{figure*}

Our goal is to develop a cost effective and (asymptotically) valid decision rule to identify the support of $\boldsymbol{\theta}$ (i.e. non-zero treatment effects) reliably in the sequential testing setting. To be more precise, let $\boldsymbol{\delta}^t = (\delta_1^t,\ldots, \delta_m^t)$ denote the decision rule based on $\mathcal{F}^t = \sigma( \{x_{is}\}_{i \in [m], s \leq t})$, where $\delta_i^t \in \{0,1\}$ indicates whether coordinate $i$ is classified as a null or non-null case. For any fixed time stage $t$, define \textbf{False Discovery Rate (FDR)} as $\mbox{FDR}(\boldsymbol{\delta}^t) = \mathbb{E}[\mbox{FDP}(\boldsymbol{\delta}^t)]$, where 
\begin{equation}
\label{fdr}
    \mbox{FDP}(\boldsymbol{\delta}^t) = \frac{\sum_{i=1}^m (1-\theta_i)\delta_i^t}{\left(\sum_{i=1}^m \delta_i^t \right)\vee 1}.
\end{equation}
An asymptotically equivalent measure is the \textbf{marginal false discovery rate (mFDR)} (\cite{genovese2002operating}) defined as 
\begin{equation}
\label{mfdr}
    \mbox{mFDR}(\boldsymbol{\delta}^t) = \frac{\mathbb{E}[\sum_{i=1}^m (1-\theta_i) \delta_i^t]}{\mathbb{E}\left[\left(\sum_{i=1}^m \delta_i^t \right)\vee 1\right]}.
\end{equation}

While mFDR or FDR guarantees validity, cost effectiveness can be measured by \textbf{missed discovery rate (MDR)}, 

\begin{equation}
\label{mdr}
        \mbox{MDR}(\boldsymbol{\delta}^t) = \mathbb{E}\left[ \frac{\sum_{i=1}^m (\theta_i (1-\delta_i^t)}{\left(\sum_{i=1}^m \theta_i \right)\vee 1}\right].
\end{equation}
The goal of large-scale A/B testing can therefore be formulated as minimizing MDR subject to FDR (or mFDR) control. For any stopping time $\tau$, we can similarly define $\mbox{FDP}(\boldsymbol{\delta}^\tau)$, $\mbox{mFDR}(\boldsymbol{\delta}^\tau)$, and $\mbox{mFDR}(\boldsymbol{\delta}^\tau)$, where the expectation is with respect to all randomness including stopping time $\tau$. 

\section{Sequential A/B Test}

We first consider an ideal situation where an oracle knows all population level parameters including the proportion of non-nulls $p$, as well as the densities of null $f_0$ and non-null $f_1$. 

\subsection{Optimal sequential test for fixed horizon}
\label{sec-optimality}
We start by restricting to a non-adaptive setting, in the sense that data for each coordinate is collected at all time stages. Suppose that $\theta_i$ are iid Bernoulli random variables with $\mathbb{P}(\theta_i = 1) = p$. Define the following oracle test statistic for stage $T$ and coordinate $i$,
\begin{equation}
\label{eq-Tor}
T_{OR}^{iT} = \frac{(1-p)\prod_{t=1}^Tf_0(x_{it})}{(1-p)\prod_{t=1}^T f_0(x_{it})+p\prod_{t=1}^Tf_1(x_{it})}.
\end{equation}
Since we assume that the $x_{it}$ are independent draws from the same distribution for all time stages, the oracle test statistic is the local false discovery rate (lfdr),  $\mathbb{P}(\theta_i = 0 | x_{i,1:T})$ (\cite{efron2001empirical}), given the data stream up to time stage $T$. Truncation with lfdr was found to be the optimal test in many situations of both sequential testing (\cite{berger2013statistical}) and multiple testing (\cite{sun2007oracle}). 

For any time stage $T$ and mFDR level of $\alpha \in (0,1)$, we define a weighted classification problem with the following loss function
\begin{equation}
\label{loss}
    L_\lambda(\theta, \delta^T) = \frac{1}{m}\left[\sum_{i=1}^m \lambda (1-\theta_i) \delta_i^T + \theta_i (1-\delta_i^T) - \alpha \lambda \delta_i^T\right],
\end{equation}
where $\lambda >0$ is the relative weight for false positive. The following theorem shows that $T_{OR}^{iT}$ minimizes the classification risk $\mathbb{E}[L_{\lambda}(\theta, \delta^T)]$ at any time stage $T$.  

\begin{theorem}
\label{classification}
Suppose that $\theta_i$, $i = 1,\ldots, m$, are iid Bernoulli$(p)$ random variables, and $x_{it}$ are independent draws from the mixture model in equation \eqref{model} for all $t= 1,\ldots, T$. Given known values of $p, f_0, f_1$, the classification risk $\mathbb{E}[L_{\lambda}(\theta, \delta^T)]$ is minimized by $\delta^B(\lambda,T) = (\delta_1^B(\lambda,T), \ldots, \delta_m^B(\lambda,T))$ for any $T \geq 1$, where 
\begin{equation}
\label{optimal}
    \delta_i^B(\lambda, T) = 1\Big\{T_{OR}^{iT} \leq \frac{1+\alpha\lambda}{1+\lambda}\Big\}.
\end{equation}
\end{theorem}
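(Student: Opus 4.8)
The plan is to exploit the additive, separable structure of the loss in \eqref{loss} and reduce the problem to a coordinatewise pointwise minimization after conditioning on the observed data. First I would write the expected risk as
\begin{equation}
\mathbb{E}[L_\lambda(\theta, \delta^T)] = \frac{1}{m}\sum_{i=1}^m \mathbb{E}\Big[\lambda(1-\theta_i)\delta_i^T + \theta_i(1-\delta_i^T) - \alpha\lambda\delta_i^T\Big],
\end{equation}
and use the tower property to condition each summand on $\mathcal{F}^T$. Since $\delta_i^T$ is $\mathcal{F}^T$-measurable by construction, it factors out of the inner conditional expectation, so that only $\mathbb{E}[\theta_i \mid \mathcal{F}^T]$ remains to be evaluated. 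Because each $\delta_i^T$ enters exactly one summand, minimizing the total risk amounts to minimizing each term separately over all $\mathcal{F}^T$-measurable rules.

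The key step is to identify this posterior with the oracle statistic. A direct application of Bayes' rule to the mixture model \eqref{model}, using that the $x_{it}$ are independent draws across $t$, gives $\mathbb{P}(\theta_i = 0 \mid x_{i,1:T}) = T_{OR}^{iT}$ as defined in \eqref{eq-Tor}. Because the $\theta_i$ are iid and the data streams $\{x_{i,1:T}\}_{i}$ are mutually independent across coordinates, the posterior of $\theta_i$ given the full filtration collapses to the posterior given only its own stream, i.e. $\mathbb{E}[\theta_i \mid \mathcal{F}^T] = \mathbb{E}[\theta_i \mid x_{i,1:T}] = 1 - T_{OR}^{iT}$. Substituting, the conditional contribution of coordinate $i$ becomes $\lambda\, \delta_i^T\, T_{OR}^{iT} + (1-\delta_i^T)(1 - T_{OR}^{iT}) - \alpha\lambda\, \delta_i^T$, a function of the single binary variable $\delta_i^T \in \{0,1\}$ for each fixed data realization.

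Finally I would minimize this expression pointwise in $\delta_i^T$. Comparing the value at $\delta_i^T = 0$, namely $1 - T_{OR}^{iT}$, with the value at $\delta_i^T = 1$, namely $\lambda T_{OR}^{iT} - \alpha\lambda$, the choice $\delta_i^T = 1$ is optimal precisely when $\lambda T_{OR}^{iT} - \alpha\lambda \le 1 - T_{OR}^{iT}$, which rearranges to $T_{OR}^{iT} \le (1+\alpha\lambda)/(1+\lambda)$. This is exactly the Bayes rule \eqref{optimal}, and since this coordinatewise minimizer is realized by an $\mathcal{F}^T$-measurable rule, the simultaneous minimization of each summand minimizes the total risk. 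The argument holds verbatim for every fixed $T \ge 1$.

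I expect the only genuine subtlety to be verifying that the coordinatewise Bayes rule, which a priori may use the entire filtration $\mathcal{F}^T$, in fact depends only on the own data stream $x_{i,1:T}$ and therefore coincides with a threshold on $T_{OR}^{iT}$. This reduction is exactly where the cross-coordinate independence enters, collapsing $\mathbb{P}(\theta_i = 0 \mid \mathcal{F}^T)$ to $\mathbb{P}(\theta_i = 0 \mid x_{i,1:T})$; without it, the optimal rule would still have threshold form but the relevant posterior would not reduce to the single-coordinate statistic in \eqref{eq-Tor}. A minor bookkeeping point is the tie case $T_{OR}^{iT} = (1+\alpha\lambda)/(1+\lambda)$, where both decisions attain the same conditional risk, so the convention of using ``$\le$'' in \eqref{optimal} is harmless.
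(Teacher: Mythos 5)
Your proposal is correct and follows essentially the same route as the paper's proof: both condition on the observed data, exploit the separable structure of the loss to reduce to a coordinatewise pointwise minimization of the posterior risk, and rearrange the comparison $\lambda T_{OR}^{iT} - \alpha\lambda \leq 1 - T_{OR}^{iT}$ into the threshold $T_{OR}^{iT} \leq (1+\alpha\lambda)/(1+\lambda)$. Your explicit treatment of cross-coordinate independence (collapsing $\mathbb{E}[\theta_i \mid \mathcal{F}^T]$ to $\mathbb{E}[\theta_i \mid x_{i,1:T}]$) and of the tie case are minor refinements of details the paper leaves implicit.
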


The above theorem shows that the Bayes optimal sequential test for the weighted classification problem is a truncation rule based on lfdr statistic. We then make connections with the optimal solution in the large-scale A/B testing problem. For the sake of notation, let $G_0^{iT}$ and $G_1^{iT}$ be the cumulative distribution functions (cdf) of $T_{OR}^{iT}$ under the null and the alternative, respectively, then the marginal distribution for $T_{OR}^{iT}$ can be written as $(1-p)G_0^{iT} + pG_1^{iT}$.
\begin{theorem}
\label{thm-optimality}
Under the same data generating process as in Theorem \ref{classification}. Let $1-p > \alpha$. Assume that $G_0^{iT}$ and $G_1^{iT}$ are continuous functions and 
\begin{equation}
\label{optimality-condition}
    \frac{(1-p)G_0^{iT}(\alpha)}{(1-p) G_0^{iT}(\alpha)+ pG_1^{iT}(\alpha)} < \alpha,
\end{equation}
then there exists a $\lambda^*>0$ such that the solution to the weighted classification problem in equation \eqref{optimal} is also optimal in the multiple testing problem at time stage $T$ in the sense that, among all decision rules at time $T$ that control mFDR at level $\alpha$, it has the smallest expected proportion of missed discoveries. That is, $\delta^{B}(\lambda^*,T)$ is the solution to the constraint optimization problem after observing data until stage $T$.
\begin{align}
\begin{split}
\label{optimization}
    \min_{\substack{(\delta_i)_{i=1}^m: \\ \{x_{it}\}_{i \in [m], t \leq T} 
    \mapsto \{0,1\}^m}} & \quad   \frac{1}{m}\sum_{i=1}^m \mathbb{E}\left[\theta_i (1-\delta_i)\right]\\
    \textrm{s.t.} & \quad \text{mFDR}(\delta)= \frac{\mathbb{E}\left[\sum_{i=1}^m(1-\theta_i)\delta_i\right]}{\mathbb{E}\left[\left(\sum_{i=1}^m \delta_i \right)\vee 1\right] } \leq \alpha.
\end{split}
\end{align}
\end{theorem}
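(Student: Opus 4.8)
The plan is to prove the result by a Lagrangian-duality argument that ties the constrained optimization problem \eqref{optimization} to the family of unconstrained weighted-classification problems already solved in Theorem \ref{classification}. The starting observation is that $\delta^B(\lambda,T)$ is a pure thresholding rule on the lfdr statistic with cutoff $c(\lambda) = \frac{1+\alpha\lambda}{1+\lambda}$, and that $c(\cdot)$ is a continuous, strictly decreasing bijection from $(0,\infty)$ onto $(\alpha,1)$ (indeed $c'(\lambda) = (\alpha-1)/(1+\lambda)^2 < 0$, with $c(0^+)=1$ and $c(\infty)=\alpha$). Hence tuning $\lambda$ is equivalent to sliding the rejection cutoff over $(\alpha,1)$, and the whole argument can be phrased in terms of the cutoff $c$.

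First I would express the mFDR of the threshold rule $\{T_{OR}^{iT}\le c\}$ as a function of $c$. Writing $\mathrm{FD}=\sum_i(1-\theta_i)\delta_i^T$, $\mathrm{MD}=\sum_i\theta_i(1-\delta_i^T)$ and $R=\sum_i\delta_i^T$, the iid structure gives $\mathbb{E}[\mathrm{FD}]=m(1-p)G_0^{iT}(c)$ and $\mathbb{E}[R]=m[(1-p)G_0^{iT}(c)+pG_1^{iT}(c)]$, so that $\mathrm{mFDR}(c)$ equals $(1-p)G_0^{iT}(c)/[(1-p)G_0^{iT}(c)+pG_1^{iT}(c)]$ up to the $\vee1$ correction in the denominator. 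Continuity of $G_0^{iT},G_1^{iT}$ makes this map continuous in $c$. At the lower end $c=\alpha$ (i.e.\ $\lambda\to\infty$) condition \eqref{optimality-condition} forces $\mathrm{mFDR}<\alpha$, while at the upper end $c\to1$ (i.e.\ $\lambda\to0$) every coordinate is rejected, giving $\mathrm{mFDR}=1-p>\alpha$ by hypothesis. The intermediate value theorem then produces a cutoff $c^*\in(\alpha,1)$, hence a unique $\lambda^*=c^{-1}(c^*)>0$, at which the constraint is exactly active, $\mathrm{mFDR}(\delta^B(\lambda^*,T))=\alpha$.

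The optimality itself I would obtain by rewriting the expected loss. Summing \eqref{loss} and taking expectations yields $m\,\mathbb{E}[L_\lambda(\theta,\delta^T)] = \mathbb{E}[\mathrm{MD}] + \lambda\big(\mathbb{E}[\mathrm{FD}]-\alpha\,\mathbb{E}[R]\big)$, so the classification loss is precisely the missed-discovery count plus $\lambda$ times the signed slack in the mFDR constraint. Fix the $\lambda^*$ found above. For the Bayes rule the bracketed slack vanishes, so $m\,\mathbb{E}[L_{\lambda^*}(\theta,\delta^B(\lambda^*,T))]=\mathbb{E}[\mathrm{MD}(\delta^B(\lambda^*,T))]$. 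For any competing rule $\delta$ feasible for \eqref{optimization} the slack $\mathbb{E}[\mathrm{FD}(\delta)]-\alpha\mathbb{E}[R(\delta)]\le0$, and Theorem \ref{classification} gives $\mathbb{E}[L_{\lambda^*}(\theta,\delta^B(\lambda^*,T))]\le \mathbb{E}[L_{\lambda^*}(\theta,\delta)]$. Chaining these, $\mathbb{E}[\mathrm{MD}(\delta^B(\lambda^*,T))] \le \mathbb{E}[\mathrm{MD}(\delta)] + \lambda^*\big(\mathbb{E}[\mathrm{FD}(\delta)]-\alpha\mathbb{E}[R(\delta)]\big)\le \mathbb{E}[\mathrm{MD}(\delta)]$ since $\lambda^*>0$, which is exactly the claimed minimality of the objective in \eqref{optimization}.

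The main obstacle is the discrepancy between the $\mathbb{E}[R]$ appearing in the loss decomposition and the $\mathbb{E}[R\vee1]$ appearing in the definition of mFDR; the slack terms used above are really $\mathbb{E}[\mathrm{FD}]-\alpha\,\mathbb{E}[R\vee1]$ shifted by $\alpha\,\mathbb{P}(R=0)$, so the Bayes rule's slack is $\alpha\,\mathbb{P}(R^*=0)$ rather than exactly zero and the feasibility bound carries a matching term. I would control this by noting that under the iid model $\mathbb{P}(R=0)=\big(1-(1-p)G_0^{iT}(c)-pG_1^{iT}(c)\big)^m$ decays geometrically in $m$, so the correction is asymptotically negligible and both boundary evaluations and the sign of the slack are preserved for large $m$; alternatively one observes that on $\{R=0\}$ one has $\mathrm{FD}=0$, so this event contributes nothing to the numerator. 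A secondary point to verify is that $(1-p)G_0^{iT}(c)+pG_1^{iT}(c)$ stays bounded away from zero on the relevant range of $c$, so that $\mathrm{mFDR}(\cdot)$ is genuinely continuous and the intermediate value theorem applies.
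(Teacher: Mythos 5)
Your proposal is correct and follows essentially the same route as the paper: it identifies the weighted classification risk as the Lagrangian of \eqref{optimization}, invokes Theorem \ref{classification} for the Lagrangian minimizer, and uses continuity of $G_0^{iT}, G_1^{iT}$ together with the two boundary limits ($\mathrm{mFDR} \to 1-p > \alpha$ as the cutoff tends to $1$, and $\mathrm{mFDR} < \alpha$ at cutoff $\alpha$ by \eqref{optimality-condition}) to produce a $\lambda^*$ at which the constraint is active; your direct chaining of inequalities is just the contrapositive form of the paper's proof by contradiction. If anything you are slightly more careful than the paper, which silently replaces $\mathbb{E}\left[\left(\sum_i \delta_i\right) \vee 1\right]$ by $\mathbb{E}\left[\sum_i \delta_i\right]$ when rewriting the constraint, whereas you flag this discrepancy explicitly (though note your second suggested fix, that $\mathrm{FD}=0$ on $\{R=0\}$, does not by itself restore the needed sign of the slack for an arbitrary feasible competitor).
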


The above theorem establishes that truncation of lfdr is also optimal in the multiple testing problem at any fixed time stage $T$. For large-scale A/B testing, the signal is usually sparse, where the proportion of non-nulls $p$ is usually in the range of $(0.01, 0.1)$, thus satisfying the weak assumption $1-p > \alpha$. 

The assumption of \eqref{optimality-condition} is equivalent to the statement that $\text{mFDR}$ of simple trucation rule at level $\alpha$ is less than $\alpha$.

\begin{proposition}
\label{prop-optimality-condition}
Under the same data generating process as in Theorem \ref{classification}. For any coordinate $i$ and fixed time stage $T$, 
\begin{equation}
\label{eq-condition}
    \frac{(1-p)G_0^{iT}(\alpha)}{(1-p)G_0^{iT}(\alpha) + p G_1^{iT}(\alpha)} \leq \alpha.
\end{equation}
\end{proposition}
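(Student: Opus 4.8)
The plan is to recognize the left-hand side of \eqref{eq-condition} as a posterior null probability and then exploit the self-referential nature of the oracle statistic $T_{OR}^{iT}$. Since $\theta_i \sim \text{Bernoulli}(p)$ and $G_0^{iT}, G_1^{iT}$ are the CDFs of $T_{OR}^{iT}$ conditional on $\theta_i = 0$ and $\theta_i = 1$ respectively, the numerator equals $(1-p)G_0^{iT}(\alpha) = \mathbb{P}(T_{OR}^{iT} \leq \alpha,\ \theta_i = 0)$ and the denominator is the marginal probability $\mathbb{P}(T_{OR}^{iT} \leq \alpha) = (1-p)G_0^{iT}(\alpha) + pG_1^{iT}(\alpha)$. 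Hence the ratio is exactly $\mathbb{P}(\theta_i = 0 \mid T_{OR}^{iT} \leq \alpha)$, and the claim reduces to bounding this conditional probability by $\alpha$.

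The key step uses that $T_{OR}^{iT}$ is precisely the lfdr, $T_{OR}^{iT} = \mathbb{P}(\theta_i = 0 \mid x_{i,1:T})$, as noted after \eqref{eq-Tor}. Because the event $\{T_{OR}^{iT} \leq \alpha\}$ is $\sigma(x_{i,1:T})$-measurable, I would condition on $x_{i,1:T}$ and apply the tower property:
\[
\mathbb{P}(\theta_i = 0,\ T_{OR}^{iT} \leq \alpha) = \mathbb{E}\big[\,\mathbf{1}\{T_{OR}^{iT} \leq \alpha\}\, \mathbb{P}(\theta_i = 0 \mid x_{i,1:T})\,\big] = \mathbb{E}\big[\,\mathbf{1}\{T_{OR}^{iT} \leq \alpha\}\, T_{OR}^{iT}\,\big].
\]
Dividing by $\mathbb{P}(T_{OR}^{iT} \leq \alpha)$ then identifies the ratio with the truncated conditional expectation $\mathbb{E}[\,T_{OR}^{iT} \mid T_{OR}^{iT} \leq \alpha\,]$.

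The proof finishes with a pointwise bound: on the event $\{T_{OR}^{iT} \leq \alpha\}$ the integrand satisfies $T_{OR}^{iT} \leq \alpha$, so $\mathbb{E}[\,T_{OR}^{iT} \mid T_{OR}^{iT} \leq \alpha\,] \leq \alpha$, which is the claim. There is no genuinely hard analytic step; the only care needed is the degenerate case $\mathbb{P}(T_{OR}^{iT} \leq \alpha) = 0$, in which the numerator vanishes as well and the inequality holds vacuously. The real content, worth stating explicitly, is conceptual: the self-consistency of the lfdr forces the posterior null mass over any region $\{T_{OR}^{iT} \leq \alpha\}$ to stay below $\alpha$, which is exactly the assertion (made in the remark preceding the proposition) that the simple truncation rule at level $\alpha$ has mFDR no larger than $\alpha$, thereby verifying the hypothesis \eqref{optimality-condition} of Theorem \ref{thm-optimality} up to the strictness distinction.
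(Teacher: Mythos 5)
Your proof is correct and takes essentially the same route as the paper's: both reduce the numerator $(1-p)G_0^{iT}(\alpha)$ to $\mathbb{E}\left[1\{T_{OR}^{iT}\leq\alpha\}\,T_{OR}^{iT}\right]$ via the tower property and the lfdr identity $T_{OR}^{iT}=\mathbb{P}(\theta_i=0\mid x_{i,1:T})$, then apply the pointwise bound $T_{OR}^{iT}\leq\alpha$ on the rejection event. The only cosmetic differences are that the paper phrases the ratio as the mFDR of the simple truncation rule (summing over coordinates), whereas you phrase it as a single-coordinate posterior null probability, and you additionally handle the degenerate case $\mathbb{P}(T_{OR}^{iT}\leq\alpha)=0$, which the paper leaves implicit.
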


From the above proposition, we know that assumption \eqref{optimality-condition} only requires the inequality to be strict. That is, the simple thresholding rule is more conservative than the nominal level $\alpha$. In other words, there is additional gain by increasing the truncation threshold without overshooting in terms of $\text{mFDR}$. 

We have motivated the choice of oracle test statistic by establishing optimality of lfdr-based test in fixed time (horizon) setting. Given observed data until fixed time stage $T$, the decision rule  $\delta^B(\lambda^*, T)$ minimizes missed discovery rate among all tests that control mFDR at level $\alpha$. Since power is $1-MDR$, it is equivalent to maximizing power subject to mFDR control. 

\subsection{Adaptive sequential test}
\label{sec-adaptive}
Instead of sampling for every coordinate at each time stage, we may want to consider adaptive sampling, in order to achieve some balance between power and sampling cost. While we have shown in the previous section that the Bayes optimal test at every time stage is a thresholding function based on observations at all past time stages, it might suffer from sample inefficiency, which is undesirable for monetization A/B tests. For this purpose we consider the following adaptive procedure

\begin{align*}
    &\text{Stop sampling and reject for coordinate $i$ at time stage } \\ 
    &N_i = \min\Big\{t \geq 1: T_{OR}^{it} \leq \frac{1+\alpha\lambda_t}{1+\lambda_t}\Big\},
\end{align*}
where $\lambda_t$ could vary across time points. 
In other words, we maintain an active set $\mathcal{A}_t = \{i: N_i \geq t\}$ for sampling and testing at each time stage $t$. 

The adaptive sampling procedure allows us to save cost from data collection and further computation of test statistic. Adaptive sampling will not compromise global mFDR control, and the thresholding rule based on our oracle test statistic is stage-wise optimal (conditioning on stagewise active set) as shown in Lemma \ref{lem-optimality}.

\subsection{Oracle AMSET}

In this section, we will introduce the method of AMSET (adaptive multistage empirical Bayes test) that combines the oracle test statistics motivated in \ref{sec-optimality} and the adaptive procedure described in \ref{sec-adaptive}, to arrive at a sequential procedure for multiple testing that has valid mFDR control at any stopping time, and thus is immune to peeking.  

Let $\mathcal{A}_t$ be the active set at stage $t$. Starting with the full active set $\mathcal{A}_1 = \{1,\ldots, m\}$ for all $i \in \{1,\ldots, m\}$. We truncate the lfdr test statistics using compound thresholding (\cite{sun2007oracle}) and update the active set at every time stage $t$ for a pre-specified level $\alpha > 0$ with details listed in Algorithm \ref{algo-oracle}, and a diagram illustration in Figure \ref{illustration}. The non-adaptive version that we call multistage empirical Bayes test (MSET) in Algorithm \ref{algo-mset} makes decisions based on full observations at each coordinate and all time stages and therefore has higher power. In general we recommend AMSET over MSET for monetization A/B tests in order to save experimentation cost as a tradeoff for power.  

\RestyleAlgo{ruled}
\begin{algorithm}[!h]
\caption{Oracle AMSET}
\label{algo-oracle}
Set $\alpha$ and start with full active set $\mathcal{A}_1 = \{1,\ldots, m\}$\;
Specify any stopping criterion\;
\For{time stages $t= 1, 2,\ldots$}{
(\textbf{Data Sampling}) Observe data $\{x_{it}\}_{i \in \mathcal{A}_t}$\;
(\textbf{Ranking}) For all $i \in \mathcal{A}_t$, compute $T_{OR}^{it}$ in Eq \eqref{eq-Tor} based on $\{x_{is}\}_{s \leq t}$ and sort them in ascending order $$T_{OR}^{(1),t}, T_{OR}^{(2),t}, \ldots, T_{OR}^{(k_t),t},$$ where $k_t = \mbox{Card}(\mathcal{A}_t)$\;
(\textbf{Compound Thresholding}) Define threshold 
\begin{equation*}
r_t = \max\Big\{r: \frac{1}{r} \sum_{i= 1}^{r} T_{OR}^{(i),t} \leq \alpha\Big\}
\end{equation*}
 Let $S_{t} = \{i \in \mathcal{A}_t: T_{OR}^{it} \leq T_{OR}^{(r_t), t}\}$ \;
(\textbf{Updating}) Let $\mathcal{A}_{t+1} = \mathcal{A}_t \setminus S_t$\;
(\textbf{Decision}) Let $\delta_i^t = 1$ for all $i \notin \mathcal{A}_{t+1}$ and $0$ otherwise\;
Break if stopping criterion is satisfied or $\mathcal{A}_{t+1} = \emptyset$.}
\end{algorithm}

The procedure AMSET offers flexibility in terms of cost saving since there is no need to collect data once a specific coordinate is rejected. That is to say, a rejected hypothesis will always remain rejected. AMSET also allows us the convenience to continuously monitor the sequential testing results with mFDR guarantee at any stopping time as stated in the following Theorem. 

\begin{theorem}
\label{thm-oracle}
For any stopping time $\tau$, let $\delta^\tau = (\delta_1^\tau,\ldots, \delta_m^\tau)$ generated by Algorithm \ref{algo-oracle}. Then
\begin{equation*}
    \textup{mFDR}(\mathbf{\delta}^\tau) \leq \alpha.
\end{equation*}
We also have fixed-horizon mFDR and FDR control at level $\alpha$, i.e.
$$
\text{FDR}(\delta^t | S_t) = \mathbb{E}\left[\frac{\sum_{i \in S_t} (1-\theta_i) \delta_i^t}{\left(\sum_{i \in S_t} \delta_i^t\right) \vee 1}\right] \leq \alpha,
$$
and
$$
\textup{mFDR}(\delta^t |S_t) = \frac{\mathbb{E}\left[\sum_{i \in S_t} (1-\theta_i) \delta_i^t\right]}{\mathbb{E}\left[\left(\sum_{i \in S_t} \delta_i^t\right) \vee 1\right]} \leq \alpha,
$$
for any fixed $t$.

\end{theorem}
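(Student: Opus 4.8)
The plan is to reduce everything to one structural fact and one pathwise budget. The structural fact is that, since the coordinates are mutually independent, conditioning on the full history is, for a single label $\theta_i$, the same as conditioning on that coordinate's own data; consequently $\mathbb{E}[(1-\theta_i)\mid\mathcal{F}^t]=T_{OR}^{it}$ and $(T_{OR}^{it})_{t\geq 1}$ is an $\mathcal{F}^t$-martingale for every $i$ (apply the tower property to $\mathbb{E}[(1-\theta_i)\mid\mathcal{F}^{t+1}]$). The pathwise budget is the definition of $r_t$: compound thresholding forces $\sum_{i\in S_t}T_{OR}^{it}=\sum_{i=1}^{r_t}T_{OR}^{(i),t}\leq\alpha\,r_t=\alpha\,|S_t|$ on every sample path. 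All three conclusions will follow by combining these two ingredients at the appropriate (fixed or random) time.

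First I would dispatch the fixed-horizon claims, since $S_t$ is $\mathcal{F}^t$-measurable. Conditioning on $\mathcal{F}^t$ freezes $S_t$ and $|S_t|$ while leaving each $(1-\theta_i)$ with conditional mean $T_{OR}^{it}$. For the conditional mFDR I would write $\mathbb{E}[\sum_{i\in S_t}(1-\theta_i)]=\mathbb{E}[\sum_{i\in S_t}T_{OR}^{it}]\leq\alpha\,\mathbb{E}[|S_t|]\leq\alpha\,\mathbb{E}[|S_t|\vee 1]$, which is exactly $\text{mFDR}(\delta^t\mid S_t)\leq\alpha$ after dividing. For the conditional FDR I would instead bound the conditional ratio: using $\mathbb{E}[(1-\theta_i)\mid\mathcal{F}^t]=T_{OR}^{it}$ gives $\mathbb{E}[\sum_{i\in S_t}(1-\theta_i)/(|S_t|\vee 1)\mid\mathcal{F}^t]=(\sum_{i\in S_t}T_{OR}^{it})/(|S_t|\vee 1)$, which is at most $\alpha$ on $\{|S_t|\geq 1\}$ by the budget and equals $0$ on $\{|S_t|=0\}$; taking expectations yields $\text{FDR}(\delta^t\mid S_t)\leq\alpha$.

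Next comes the anytime claim, which is the heart of the matter. Let $N_i=\inf\{t\geq 1:i\in S_t\}$ be the rejection time of coordinate $i$; since $\{N_i\leq t\}=\bigcup_{s\leq t}\{i\in S_s\}\in\mathcal{F}^t$, each $N_i$ is a stopping time, and because a rejected coordinate never re-enters the active set we have $\delta_i^\tau=\mathbf{1}\{N_i\leq\tau\}$. Writing $V_\tau=\sum_i(1-\theta_i)\delta_i^\tau$ and $R_\tau=\sum_i\delta_i^\tau$ for the cumulative false discoveries and rejections, the target is $\mathbb{E}[V_\tau]\leq\alpha\,\mathbb{E}[R_\tau\vee 1]$. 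The crucial move is to replace the unobserved label by the lfdr at the data-dependent rejection time: for each $i$, $N_i\wedge\tau$ is an almost surely finite stopping time and $\{N_i\leq\tau\}\in\mathcal{F}^{N_i\wedge\tau}$ (the standard fact $\{\sigma\leq\rho\}\in\mathcal{F}^{\sigma\wedge\rho}$ for stopping times $\sigma,\rho$). Applying optional stopping to the bounded martingale $(T_{OR}^{it})_t$ at $N_i\wedge\tau$ gives $\mathbb{E}[(1-\theta_i)\mid\mathcal{F}^{N_i\wedge\tau}]=T_{OR}^{i,N_i\wedge\tau}$, so conditioning $(1-\theta_i)\mathbf{1}\{N_i\leq\tau\}$ on $\mathcal{F}^{N_i\wedge\tau}$ and noting that on $\{N_i\leq\tau\}$ the stopped statistic equals $T_{OR}^{i,N_i}$ yields $\mathbb{E}[(1-\theta_i)\mathbf{1}\{N_i\leq\tau\}]=\mathbb{E}[T_{OR}^{i,N_i}\mathbf{1}\{N_i\leq\tau\}]$. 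Summing over $i$ and regrouping rejections by stage (each $i$ lies in exactly one $S_s$), I get $\mathbb{E}[V_\tau]=\mathbb{E}[\sum_{s\leq\tau}\sum_{i\in S_s}T_{OR}^{is}]$, and the pathwise budget bounds this double sum by $\alpha\sum_{s\leq\tau}|S_s|=\alpha R_\tau$, so $\mathbb{E}[V_\tau]\leq\alpha\,\mathbb{E}[R_\tau]\leq\alpha\,\mathbb{E}[R_\tau\vee 1]$, i.e. $\text{mFDR}(\delta^\tau)\leq\alpha$.

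The hard part will be justifying the exchange in the anytime step: $S_t$ is produced by compound thresholding across all active coordinates, so $N_i$ depends on every coordinate's data, yet I must evaluate coordinate $i$'s lfdr martingale at this coupled random time. Two facts rescue the argument: (i) the martingale property of $T_{OR}^{it}$ in the joint filtration $\mathcal{F}^t$ survives the coupling precisely because the cross-coordinate data is independent of $\theta_i$; and (ii) the measurability $\{N_i\leq\tau\}\in\mathcal{F}^{N_i\wedge\tau}$ together with boundedness of $T_{OR}^{it}$ in $[0,1]$ secures optional stopping at the almost surely finite time $N_i\wedge\tau$ with no extra integrability hypotheses. I would also flag the mild requirement that $\tau<\infty$ almost surely, so that coordinates never rejected contribute $\delta_i^\tau=0$; the uniform integrability afforded by boundedness extends the identity to $\tau=\infty$ via $T_{OR}^{i\infty}=(1-\theta_i)$ should one wish to allow it.
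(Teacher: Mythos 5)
Your proposal is correct, and its two building blocks are exactly the paper's: the identity $\mathbb{E}[1-\theta_i \mid \mathcal{F}^t] = T_{OR}^{it}$ and the pathwise compound-thresholding budget $\sum_{i\in S_t} T_{OR}^{it} \leq \alpha\, |S_t|$. Your fixed-horizon arguments coincide with the paper's proof essentially verbatim. Where you genuinely diverge is the anytime claim: the paper never invokes optional stopping. It decomposes the mFDR numerator \emph{by time}, writing $\mathbb{E}\bigl[\sum_{t\leq\tau}\sum_{i\in S_t}(1-\theta_i)\delta_i^t\bigr] = \sum_{t=1}^{\infty}\mathbb{E}\bigl[1\{t\leq\tau\}\sum_{i\in S_t}(1-\theta_i)\delta_i^t\bigr]$ (Tonelli, nonnegative terms), and then, since $1\{t\leq\tau\}$ and $\delta_i^t$ are both $\mathcal{F}^t$-measurable, applies the tower property at each \emph{fixed} $t$ to replace $1-\theta_i$ by $T_{OR}^{it}$ before invoking the budget. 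You instead decompose \emph{by coordinate}, introduce the rejection times $N_i$, identify $\delta_i^\tau = 1\{N_i\leq\tau\}$, and apply Doob's optional sampling theorem to the uniformly integrable Doob martingale $(T_{OR}^{it})_t$ at the random time $N_i\wedge\tau$, using $\{N_i\leq\tau\}\in\mathcal{F}^{N_i\wedge\tau}$; regrouping rejections by stage then recovers exactly the paper's double sum. The paper's route is more elementary (no stopping-time $\sigma$-algebras, no optional sampling) and handles possibly infinite $\tau$ with no extra care, precisely the edge case you must flag ($\delta_i^\tau = 1\{N_i\leq\tau\}$ fails on $\{N_i=\tau=\infty\}$ unless $\tau<\infty$ a.s.). Your route buys a conceptually sharper accounting — each rejection is "charged" the lfdr at the moment it occurs — which makes the martingale structure explicit and connects naturally to the always-valid/universal-inference viewpoint the paper discusses separately. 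One small caveat: your closing remark that $T_{OR}^{i\infty}=(1-\theta_i)$ does not follow from boundedness alone; martingale convergence only gives $T_{OR}^{i\infty}=\mathbb{E}[1-\theta_i\mid\mathcal{F}^{\infty}]$, and equality with $1-\theta_i$ requires consistency of the lfdr (e.g., $f_0\neq f_1$ with positive Kullback--Leibler separation). This is immaterial under your stated assumption $\tau<\infty$.
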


\begin{remark}
The compound thresholding step in the procedure above can be replaced by a simple thresholding, that is, we reject all $i \in \mathcal{A}_t$ such that $T_{OR}^{it} \leq \alpha$. Since all coordinates rejected by simple thresholding will also be rejected by compound thresholding, the former is a more conservative procedure and therefore still satisfies mFDR control at any stopping time. 
\end{remark}

\section{Data-driven Implementation}
\label{sec-dd}
In practice, we do not have the oracle knowledge of true population parameters ($f_1$, $f_0$, and $p$) and need a completely data-driven procedure that can be used for industry applications. Given large scale testing results in historical data, we can harness the power of established empirical Bayes methods in order to estimate the oracle test statistic of $T_{OR}^{it}$. For the purpose of A/B testing, the null distribution is approximately $\mathcal{N}(0,1)$ due to central limit theorem and the alternative distribution is approximately a mixture of gaussians. The difficulty with of data driven procedure thus lies in the estimation of non-null proportion $p$ and deconvolution of gaussian mixtures using relevant historical data. 

\begin{enumerate}
    \item (Estimate $\hat{p}$) For estimation of the non-null proportion $p$, we use the method proposed in \cite{jin2007estimating} based on the empirical characteristic function and Fourier analysis, which is shown to be uniformly consistent over a wide class of parameters.
    \item (Estimate $\hat{f}_1$) For estimation of non-null density, we will use Kiefer-Wolfowitz maximum likelihood estimation for Gaussian mixtures with software ``REBayes" develped by \cite{koenker2017rebayes} and computation supported by \cite{koenker2014convex}. The method belongs to the general estimation strategy of g-estimation and has consistency result as stated in \cite{kiefer1956consistency}.
\end{enumerate}

Additional details of deconvolution and recovery of $\hat{f}_1$ can be found in the Appendix \ref{appen-deconvolve}. Given estimates of non-null proportion $\hat{p}$ and non-null density $\hat{f}_1$, we get the following data-driven test statistic for coordinate $i$ at time stage $T$. 
\begin{equation}
\label{eq-Tdd}
T_{DD}^{iT} = \frac{(1-\hat{p})\prod_{t=1}^T f_0(x_{it})}{(1-\hat{p})\prod_{t=1}^T f_0(x_{it})+\hat{p}\prod_{t=1}^T\hat{f}_1(x_{it})},
\end{equation}
which can then replace the oracle test statistics $T_{OR}^{iT}$ in AMSET (Algorithm \ref{algo-oracle}). We show details of the slightly modified algorithm of data-driven AMSET in Algorithm \ref{algo-dd}. 

\RestyleAlgo{ruled}
\begin{algorithm}[!h]
\caption{AMSET}
\label{algo-dd}
Set $\alpha$ and start with full active set $\mathcal{A}_1 = \{1,\ldots, m\}$\;
Specify any stopping criterion\;
\For{time stages $t= 1, 2,\ldots$}{
(\textbf{Data Sampling}) Observe data $\{x_{it}\}_{i \in \mathcal{A}_t}$\;
(\textbf{Ranking}) For all $i \in \mathcal{A}_t$, compute $T_{DD}^{it}$ in Eq \eqref{eq-Tdd} based on $\{x_{is}\}_{s \leq t}$ and sort them in ascending order $$T_{DD}^{(1),t}, T_{DD}^{(2),t}, \ldots, T_{DD}^{(k_t),t},$$ where $k_t = \mbox{Card}(\mathcal{A}_t)$\;
(\textbf{Compound Thresholding}) Define threshold 
\begin{equation*}
r_t = \max\Big\{r: \frac{1}{r} \sum_{i= 1}^{r} T_{DD}^{(i),t} \leq \alpha\Big\}
\end{equation*}
 Let $S_{t} = \{i \in \mathcal{A}_t: T_{DD}^{it} \leq T_{DD}^{(r_t), t}\}$ \;
(\textbf{Updating}) Let $\mathcal{A}_{t+1} = \mathcal{A}_t \setminus S_t$\;
(\textbf{Decision}) Let $\delta_i^t = 1$ for all $i \notin \mathcal{A}_{t+1}$ and $0$ otherwise\;
Break if stopping criterion is satisfied or $\mathcal{A}_{t+1} = \emptyset$.}
\end{algorithm}

\section{Comparison with always valid p-value approach}
\subsection{FDR control with always valid p-value}
We compare AMSET (or MSET) with the state-of-the-art method that combines always valid p-values with Benjamini-Hochberg (BH) procedure \cite{benjamini1995controlling}. Recall always valid p-value in \cite{Jo15} is constructed using mSPRT likelihood ratio test statistic (originally proposed in \cite{robbins1974expected}) at stage $T$ for coordinate $i$,
$$
\Lambda_T^H = \int_\Theta \prod_{t=1}^T \frac{f_{\theta}(x_{it})}{f_{\theta_0}(x_{it})} dH(\theta),
$$
where $H(\theta)$ is some prior distribution on parameter $\theta$. 

Always valid p-value process for coordinate $i$ is constructed as 
$$
p_0 = 1, p_T = \min\{p_{T-1}, 1/\Lambda_T^H\},
$$
such that for any stopping time $\tau$,
$$
\forall s \in [0,1], \mathbb{P}_{\theta_0}(p_\tau \leq s) \leq s.
$$


For fair comparison, we use a prior $H$ that has probability $1$ on the true mean of alternative distribution to compare with our oracle procedure, and a prior $H \sim \mathcal{N}(0,\tau^2)$ where $\tau^2 = 2$ to compare with our data driven procedure. 

\subsection{Connection of lfdr to always valid p-values}

Both the oracle and data-driven test statistics of lfdr can also be formulated in the framework of always valid p-values using a standard martingale argument for running MLE LRT in universal inference (\cite{wasserman2020universal}). For the sake of brevity, we will only illustrate construction of always valid p-value using data-driven statistic $T_{DD}^{iT}$ for each coordinate $i$. 

Consider the sequential test that rejects the null and stop at time $T$ if 
\begin{equation}
\label{universal}
L_{iT} := \frac{1-\hat{p}}{\hat{p}} \frac{1-T_{DD}^{iT}}{T_{DD}^{iT}} \geq 1/\alpha.
\end{equation}
Since $L_{iT} = \frac{\prod_{t=1}^T \hat{f}_1(x_{it})}{\prod_{t=1}^T f_0(x_{it})}$ is a nonnegative martingale with respect to the natural filtration, we can use Doob's inequality to bound the Type-I error as  
$$
\mathbb{P}_{H_0} (\exists t \geq 1: L_{it} \geq 1/\alpha) \leq \alpha \mathbb{E}_{H_0}[L_{it}] = \alpha. 
$$
This shows that the sequential test defined by Eq \eqref{universal} is a valid sequential test at level $\alpha$. Since there is a correspondence between sequential tests and always valid p-values, we know that $p_{it} = 1/L_{it}$ is an always valid p-value. That is, for any stopping time $\tau$, $\mathbb{P}_{H_0}(p_{i\tau} \leq \alpha) \leq \alpha$. This means that we can also adopt Optimizely's framework by combining BH procedure with the always valid p-value process constructed with Eq \eqref{universal} to guarantee finite sample FDR control at any stopping time and thus is immune to peeking.

\section{Simulation}
Throughout our simulations, we compare our proposed method of AMSET with the state-of-the-art method used by Optimizely \cite{johari2021always} that combines standard BH procedure with always valid p-value constructed using mixture sequential probability ratio test (mSPRT). Since AMSET is an adaptive procedure that does not sample from all coordinates at all time stages, there will be possible power loss as compared to the non-adaptive MSET or Optimizely's always valid p-value approach.

\begin{figure*}[!ht]
    \centering
    \includegraphics[width=\linewidth]{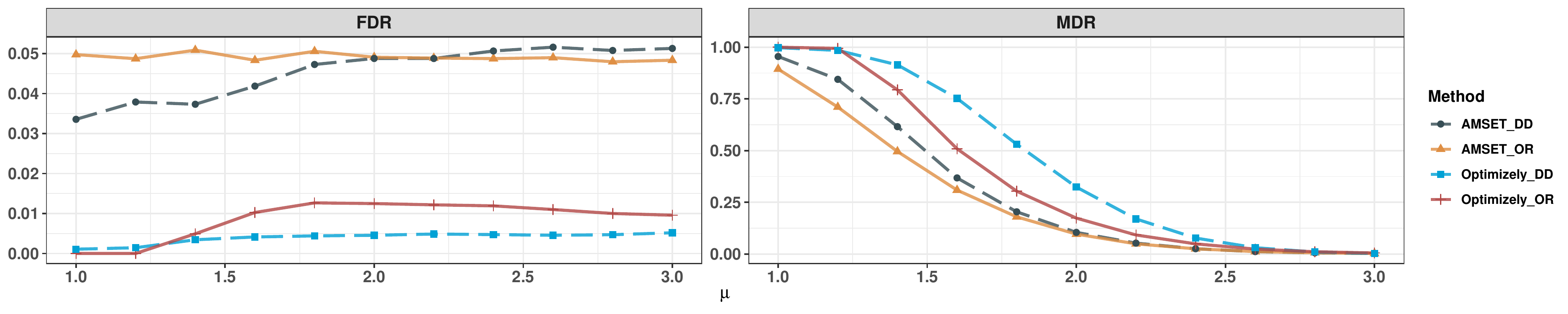}
    \caption{Setting 1 result for fixed horizon comparison. Fix $p = 0.05$, consider $f_1$ as the density of $\mathcal{N}(\mu,1)$, where $\mu$ varies from $1$ to $3$ with grid size of $0.2$. Rates are computed as average over $500$ simulations.} 
    \label{fixed_1}
\end{figure*}

\begin{figure*}[!ht]
    \centering
    \includegraphics[width=\linewidth]{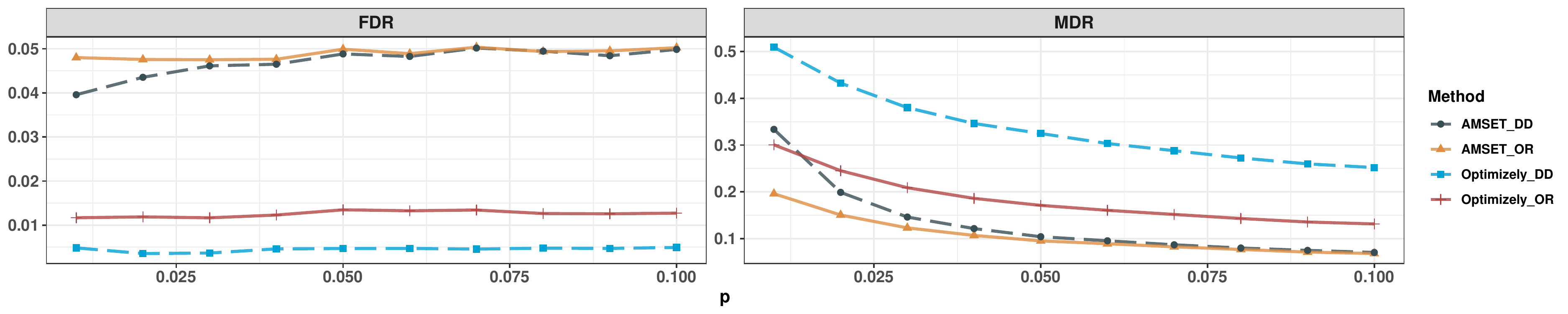}
    \caption{Setting 2 result for fixed horizon comparison. Fix $f_1$ as the density of $\mathcal{N}(2,1)$, vary $p$ from $0.01$ to $0.1$ with grid size $0.01$. Rates are computed as average over $500$ simulations.}
    \label{fixed_2}
\end{figure*}

\begin{figure*}[!ht]
    \centering
    \includegraphics[width=\linewidth]{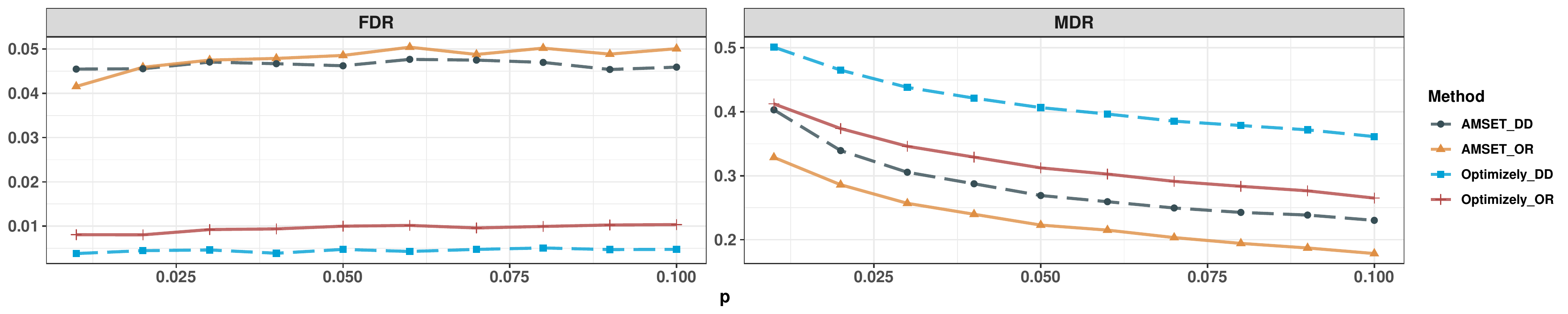}
    \caption{Setting 3 result for fixed horizon comparison. Vary $p$ from $0.02$ to $0.1$ with grid size of $0.01$ and consider $f_1$ as the density of $\mathcal{N}(\mu,1)$, where $\mu \sim \mbox{Unif}(2,4)$ at each location. Rates are computed as average over $500$ simulations.}
    \label{fixed_3}
\end{figure*}

\begin{figure*}[!ht]
    \centering
    \includegraphics[width=\linewidth]{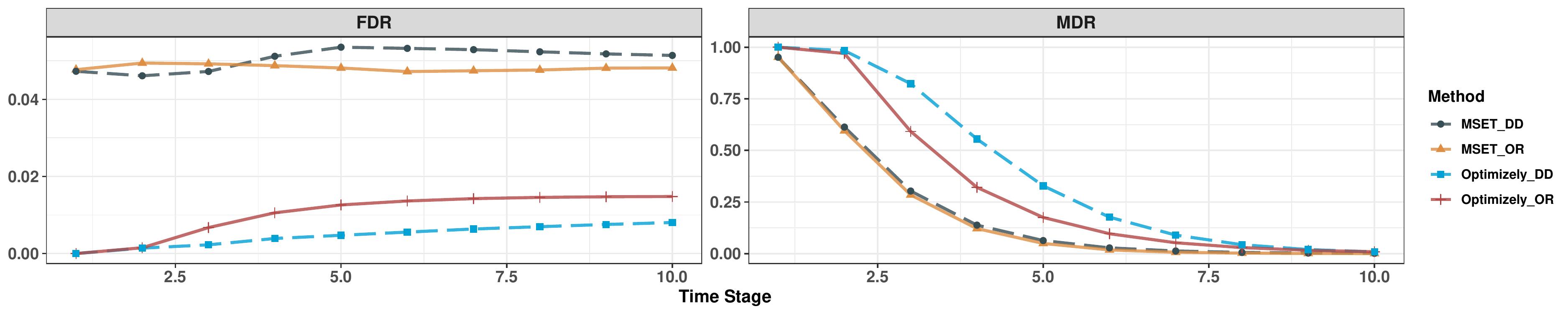}
    \caption{Setting 1 result for stagewise comparison. Fixed $\mu = 2$ for each non-null location and proportion of non-nulls $p = 0.05$. Rates are computed as average over $500$ simulations.}
    \label{stagewise-1}
\end{figure*}

\begin{figure*}[!ht]
    \centering
    \includegraphics[width=\linewidth]{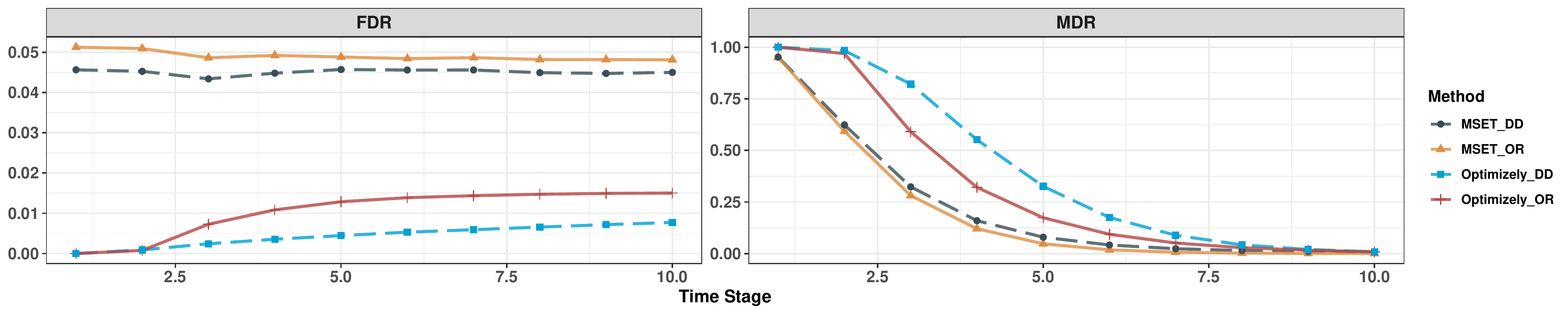}
    \caption{Setting 2 result for stagewise comparison. Random $\mu \stackrel{i.i.d}{\sim} \mbox{Unif}(1,3)$ for each non-null location and proportion of non-nulls $p = 0.05$. Rates are computed as average over $500$ simulations.}
    \label{stagewise-2}
\end{figure*}

\begin{figure*}[!ht]
    \centering
    \includegraphics[width=\linewidth]{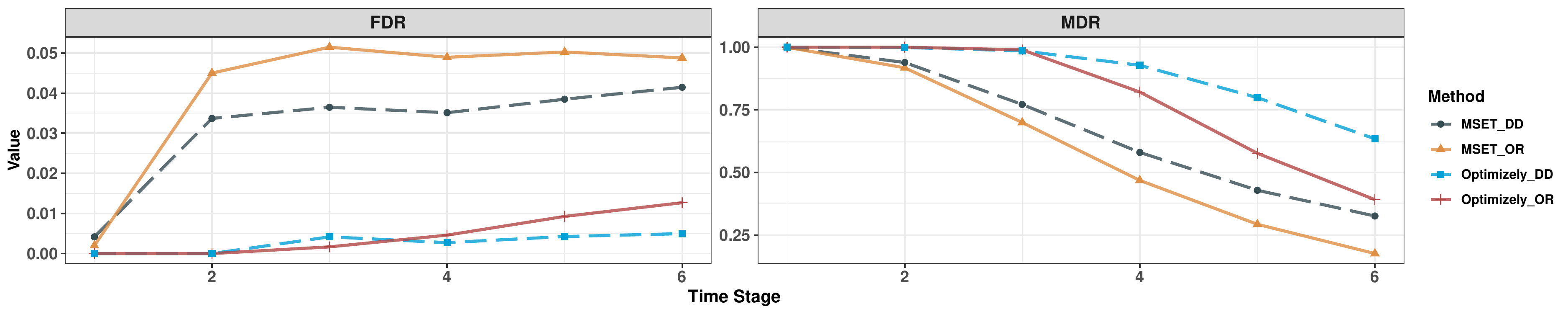}
    \caption{Real data example 1 results. Rates are computed as average over $500$ simulations.}
    \label{snap-setting1}
\end{figure*}

\begin{figure*}[!ht]
    \centering
    \includegraphics[width=\linewidth]{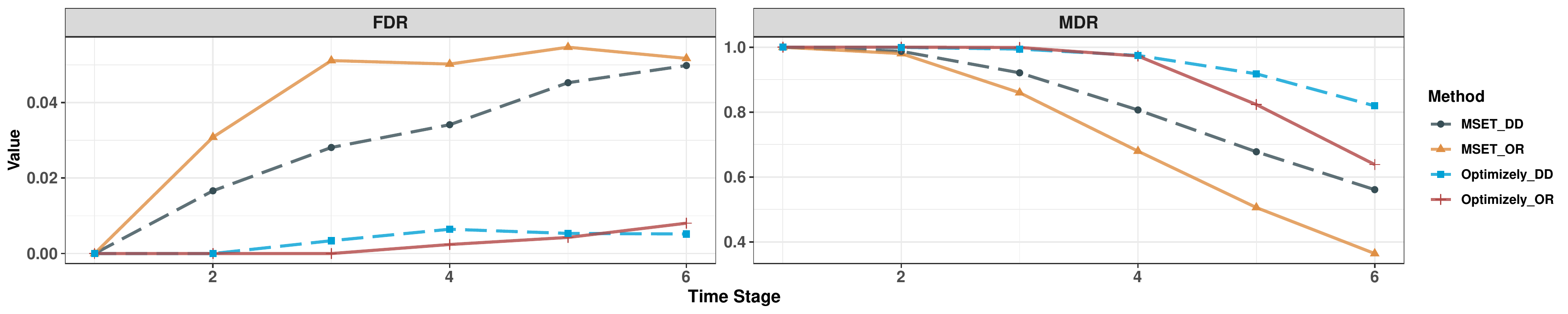}
    \caption{Real data example 2 results. Rates are computed as average over $500$ simulations.}
    \label{snap-setting2}
\end{figure*}

\begin{figure*}[!ht]
    \centering
    \includegraphics[width=\linewidth]{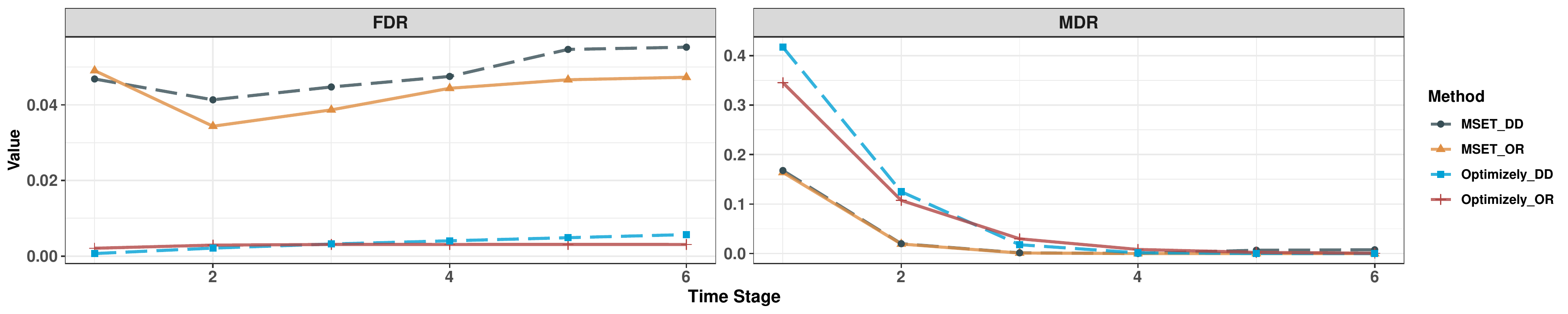}
    \caption{Real data example 3 results. Rates are computed as average over $500$ simulations.}
    \label{snap-setting3}
\end{figure*}

For data generation, we consider $m= 5000$ locations with ground truth $\theta_i$ ($1\leq i\leq m$) drawn i.i.d from Bernoulli distribution with parameter $p$, for some non-null proportion $p \in (0,1)$. In realistic settings, the signal proportion $p$ is usually small ($\leq 0.1$). 

For each time stage, if coordinate $i$ is still in the observation set, we draw an independent sample from $f_0$ if $\theta_i = 0$ and $f_1$ otherwise. In the context of A/B testing, different coordinates corresponds to different randomized experiments and observation $x_{it}$ corresponds to the two-sample t-statistic at time stage $t$ for experiment $i$. Since $f_0$ corresponds to the density of two-sample t-statistic under the null where treatment effect is zero, it can be approximated by standard normal density. Similarly, $f_1$ can be approximated by a Gaussian density with non-zero location $\mu \neq 0$ or a Gaussian mixture with various non-zero locations. 

We consider four different tests labeled in the following way. (1) \verb+AMSET_OR+/\verb+MSET_OR+: AMSET/MSET with oracle statistic (Eq \eqref{eq-Tor});
(2) \verb+AMSET_DD+/\verb+MSET_DD+: AMSET/MSET with data-driven statistic (Eq \eqref{eq-Tdd});
(3) \verb+Optimizely_OR+: always valid p-value approach with oracle prior on true alternative;
(4) \verb+Optimizely_DD+: always valid p-value approach with Gaussian prior $\mathcal{N}(0,2)$. 
For each setting with a fixed set of hyperparameters $p$ and $f_1$, we will fix $\mbox{mFDR}$ (Eq. \eqref{mfdr}) control at nominal level $0.05$ and compare FDR (Eq. \eqref{fdr}), as well as $\mbox{MDR}$ (Eq. \eqref{mdr}) across $500$ simulations. Since $\mbox{MDR} = 1- \mbox{power}$, smaller MDR corresponds to more powerful tests. 

We consider simulations in both fixed horizon and stage-wise comparisons regimes. For the fixed horizon setting, we run the simulated sequential tests until a fixed time stage and compare FDR/MDR metrics at the last time stage. This allows us to observe trends for a wide range of hyper-parameter choices. On the other hand, stage-wise comparison illustrates how the trend evolves over time for a chosen set of hyper-parameters. 

\subsection{Fixed horizon comparison}
For fixed horizon comparison, we set the number of time stages to be fixed at $5$ with the following three scenarios described below.

\begin{enumerate}
    \item Fix $p = 0.05$, consider $f_1$ as the density of $\mathcal{N}(\mu,1)$, where $\mu$ varies from $1$ to $3$ with grid size of $0.2$
    \item Fix $f_1$ as the density of $\mathcal{N}(2,1)$, vary $p$ from $0.01$ to $0.1$ with grid size $0.01$
    \item Vary $p$ from $0.02$ to $0.1$ with grid size of $0.01$ and consider $f_1$ as the density of $\mathcal{N}(\mu,1)$, where $\mu \sim \mbox{Unif}(2,4)$ at each coordinate
\end{enumerate}

From the results presented in Figure \ref{fixed_1}-\ref{fixed_3}, it can be observed that both the oracle and data driven versions of the state-of-the-art method used by Optimizely are much more conservative in term of $\mbox{FDR}$ control and has generally lower power than AMSET across all settings and hyperparameter choices. The difference in power diminishes as the signal strength $\mu$ increases and power goes to $1$, but remains approximately the same for different choices of non-null proportions $p$. 

\subsection{Stagewise comparison}
We also use simulation to compare the oracle and data-driven methods of MSET and Optimizely by looking at FDR and MDR at each time stage for a fixed set of  hyperparameters as follows.
\begin{enumerate}
    \item Fixed $\mu = 2$ for each non-null location and proportion of non-nulls $p = 0.05$
    \item Random $\mu \stackrel{i.i.d}{\sim} \mbox{Unif}(1,3)$ for each non-null location and proportion of non-nulls $p = 0.05$. 
\end{enumerate}

From results presented in Figure \ref{stagewise-1}-\ref{stagewise-2}, it can be similarly observed that always valid p-value approach used by Optimizely is much more conservative in term of $\mbox{FDR}$ control and has generally lower power than MSET procedure. The difference gets smaller as the number of time stages increases and power for all tests goes to $1$. 

\section{Real Data Example}

Instead of simulating with arbitrary hyper-parameters, we apply AMSET on actual A/B testing data. That is, we first extract realistic hyper-parameters from historical data of a pool of experiments via Kiefer-Wolfowitz MLE estimation for Gaussian mixtures. And then compare the stagewise performance of various methods under the learnt hyper-parameters.

In particular, we work with the following three metric settings. 

\begin{enumerate}
\item \verb|DIRECT_SEND____total_sent|
\item \verb|CHAT_CHAT_SEND____chat_video_send|
\item  \verb|TIER1_ACTIVE_DAYS____total_active_days|
\end{enumerate}

For each metric, we can aggregate a set of randomized experiments with potential impact on the given metric and compute standardized $z$-scores as estimators for the average treatment effects (ATE). Since each experiment is conducted on a regular basis, we obtain sequences of data for multiple experiments. For example, the CHAT\_CHAT\_SEND\_\_\_\_chat\_video\_send metric measures the total number of chats sent in the format of videos. Since we have a large repository of sequential A/B tests, in which the metric of interest is measured for both the control and treatment group for three months, the standardized $z$-scores for these tests satisfy our theoretical set up in Section \ref{setup}, i.e. coming from $F_0$ if $\text{ATE}= 0$, and $F_i$ otherwise for experiment $i$. 

\begin{tabular}[t]{cc}
    \begin{minipage}[t]{.29\linewidth}
    \centering
    \captionof*{table}{Setting 1}
        \begin{tabular}[t]{l|l}
\hline
$\pmb{\mu}$ & \pmb{Prop.} \\ \hline
0     & 0.97       \\
1.61  & 0.03      
\end{tabular}
    \end{minipage} \;
    \begin{minipage}[t]{.29\linewidth}
    \centering
    \captionof*{table}{Setting 2}
\begin{tabular}{ll}
\hline
\multicolumn{1}{l|}{\pmb{$\mu$}} & \pmb{Prop.} \\ \hline
\multicolumn{1}{l|}{0}     & 0.98       \\ 
\multicolumn{1}{l|}{1.32}  & 0.01       \\
\multicolumn{1}{l|}{1.6}   & 0.01      
\end{tabular}
    \end{minipage} \;
    \begin{minipage}[t]{.229\linewidth}
    \centering
    \captionof*{table}{Setting 3}
    \begin{tabular}{ll}
\hline
\multicolumn{1}{l|}{\pmb{$\mu$}} & \pmb{Prop.} \\ \hline
\multicolumn{1}{l|}{0}     & 0.96       \\
\multicolumn{1}{l|}{3.35}  & 0.01       \\
\multicolumn{1}{l|}{3.36}  & 0.01       \\
\multicolumn{1}{l|}{17.81} & 0.01       \\
\multicolumn{1}{l|}{19.88} & 0.01      
\end{tabular}
\end{minipage}
\end{tabular}

By pooling these experiments together, we can learn from historical data and gather useful information on the underlying metric of interest, i.e. the non-null effect size, and the overall pooled proportion of nonzero effects. The estimates of effect sizes with corresponding proportions for each of the three metrics are summarized in the above tables. For data generation, we fix the effect size at each coordinate by drawing from a multinomial distribution specified in the table. Then at each time stage, we sample independently from $\mathcal{N}(\mu, 1)$, where $\mu$ is the specified effect size. 

From results presented in Figure \ref{snap-setting1}-\ref{snap-setting3}, it can be similarly observed that the always valid p-value based method is much more conservative in term of $\mbox{FDR}$ control and has generally lower power than MSET procedure for all three real data settings. While it has to be acknowledged that the validity of data-driven version of AMSET/MSET is highly dependent on the estimation accuracy of population level parameters, our results show that the data-driven method performs reliably in most cases and yield power gain compared to other methods.

\section{Concluding Remarks}
The big data era brings tremendous opportunities and also challenges in A/B testing. 
While enjoying the potential of rich insights generated at unprecedented scale, it is crucial to be aware of the harm of ad-hoc interpretation such as ``hunt" for significance. Thus the key is the build a framework that can efficiently uncovers good insights from data, while remaining immune to the common practice of peeking. 

This paper proposes an empirical Bayes procedure AMSET for multiple testing under sequential data settings. AMSET is shown to have significant power gains as compared to existing method by leveraging rich amount of historical data available, and at the same time allows continuous monitoring without sacrificing mFDR control. 

There are multiple possible future directions based on this work. First there is still possible gap between the compound thresholding rule and the Bayes optimal threshold. Besides, more exploration in the consistency of data-driven estimation methods can be done to quantify the gap between oracle and data-driven procedures. 

\clearpage
\singlespacing
\bibliographystyle{chicago}
\bibliography{AMSET.bbl}

\clearpage
\newpage
\appendix
\section{Proof of results}
\subsection{Proof of Theorem \ref{classification}}
\begin{proof}
By the theorem set up, the posterior distribution of $\theta_i$ given $\mathbf{x}^{1:T} = \{x_{it}\}_{i \in [m], t \leq T}$ is  
\begin{equation*}
    P_{\theta_i \vert \mathbf{x}^{1:T}}(\theta_i \vert \mathbf{x}^{1:T}) = 
    \frac{(1-p)(1-\theta_i)\prod_{t=1}^T f_0(x_{it}) + p\theta_i\prod_{t=1}^T f_1(x_{it})}{(1-p)\prod_{t=1}^T f_0(x_{it})+ p\prod_{t=1}^T f_1(x_{it})}. 
\end{equation*}
Therefore the posterior risk is 
\begin{align*}
    &\mathbb{E}_{\theta\vert \mathbf{x}^{1:T}} L_\lambda(\theta, \delta^T)\\ &= \frac{1}{m}\sum_{i=1}^m  \frac{\lambda(1-p)\delta_i^T \prod_{t=1}^T f_0(x_{it}) + p(1-\delta_i^T) \prod_{t=1}^T f_1(x_{it})}{(1-p)\prod_{t=1}^T f_0(x_{it})+ p\prod_{t=1}^T f_1(x_{it})} - \alpha\lambda \delta_i^T \\
    &= \frac{1}{m}\sum_{i=1}^m \frac{p\prod_{t=1}^T f_1(x_{it})}{(1-p)\prod_{t=1}^T f_0(x_{it})+ p\prod_{t=1}^T f_1(x_{it})} \\
    & \quad + \frac{1}{m}\sum_{i=1}^m \frac{(1-\alpha)\lambda(1-p)\prod_{t=1}^T f_0(x_{it}) - (1+\alpha \lambda)p \prod_{t=1}^T f_1(x_{it})}{(1-p)\prod_{t=1}^T f_0(x_{it})+ p\prod_{t=1}^T f_1(x_{it})} \delta_i^T.
\end{align*}
Notice that the posterior risk is minimized by the decision rule 
\begin{align*}
    \delta_i^B(\lambda, T) &= 1\Big\{(1-\alpha)\lambda(1-p)\prod_{t=1}^T f_0(x_{it}) - (1+\alpha \lambda)p \prod_{t=1}^T f_1(x_{it})  \leq 0 \Big\}\\
    &=1\Big\{\frac{(1-p)\prod_{t=1}^Tf_0(x_{it})}{p\prod_{t=1}^Tf_1(x_{it})} \leq \frac{1+\alpha\lambda}{(1-\alpha)\lambda}\Big\}\\
    &=1\Big\{\frac{(1-p)\prod_{t=1}^Tf_0(x_{it})}{(1-p)\prod_{t=1}^T f_0(x_{it})+p\prod_{t=1}^Tf_1(x_{it})} \leq \frac{1+\alpha\lambda}{1+\lambda}\Big\}
\end{align*}
This concludes the proof.
\end{proof}

\subsection{Proof of Theorem \ref{thm-optimality}}
\begin{proof}
Notice that the constraint of mFDR control can also be written as 
\begin{equation*}
    \mathbb{E}\left[\sum_{i=1}^m(1-\theta_i)\delta_i\right]- \alpha \mathbb{E}\left[\sum_{i=1}^m \delta_i\right] \leq 0.
\end{equation*}
Therefore the Lagrangian dual of the optimization problem in \eqref{optimization} is exactly the classification risk for some $\lambda > 0$ 
\begin{equation}
\label{lagrangian}
    \mathcal{L}(\delta, \lambda)=\frac{1}{m}\mathbb{E}\left[\sum_{i=1}^m \theta_i(1-\delta_i) + \lambda(1-\theta_i) \delta_i - \lambda\alpha \delta_i\right].
\end{equation}
From Theorem \ref{classification} we know that $\delta^B(\lambda, T)$ is also the optimal solution for the Lagrangian dual problem in \eqref{lagrangian} for given $\lambda$. Since $\lambda >0$, we also have $\mathcal{L}(\delta^B(\lambda, T), \lambda) \leq s^*$,
where $s^*$ is the optimal value of the original constraint optimization problem \eqref{optimization}. It remains to show strong duality holds.

We claim that it suffices to show there exists $\lambda^* >0$ such that $\text{mFDR}(\delta^B(\lambda^*, T)) = \alpha$. Assume otherwise, there is some other decision $\tilde{\delta}$ such that $\text{mFDR}(\tilde{\delta}) \leq \alpha$ and $\sum_{i=1}^m\mathbb{E} \left[\theta_i (1-\tilde{\delta}_i)\right] < \sum_{i=1}^m\mathbb{E} \left[\theta_i (1-\delta^B_i(\lambda^*, T))\right]$. Then 
\begin{align*}
    &\mathcal{L}(\tilde{\delta},\lambda^*) \\
    & = \frac{1}{m}\mathbb{E}\left[\sum_{i=1}^m \theta_i(1-\tilde{\delta}_i) + \lambda^*(1-\theta_i)\tilde{\delta}_i-\lambda^*\alpha \tilde{\delta}_i\right]\\
    &< \frac{1}{m}\mathbb{E}\left[\sum_{i=1}^m \theta_i(1-\delta_i^B(\lambda^*, T)) + \lambda^*(1-\theta_i)\tilde{\delta}_i-\lambda^*\alpha \tilde{\delta}_i\right]\\
    &\leq \frac{1}{m}\mathbb{E}\left[\sum_{i=1}^m \theta_i(1-\delta_i^B(\lambda^*, T)) + \lambda^*(1-\theta_i)\delta_i^B(\lambda^*, T) - \lambda^*\alpha \delta_i^B(\lambda^*, T)\right]\\
    &= \mathcal{L}(\delta^B(\lambda^*, T), \lambda^*).
\end{align*}
This is a contradiction since $\delta^B(\lambda^*, T)$ is the minimizer of the Lagrangian with $\lambda^*$. 

We then proceed to show that the constraint equality can be satisfied. First notice that we can write 
\begin{equation*}
    \text{mFDR}(\delta^B(\lambda, T)) = \frac{(1-p)G_0^{iT}\left(\frac{1+\alpha\lambda}{1+\lambda}\right)}{(1-p)G_0^{iT}\left(\frac{1+\alpha\lambda}{1+\lambda}\right) + pG_1^{iT}\left(\frac{1+\alpha\lambda}{1+\lambda}\right)}.
\end{equation*}
As $\lambda \to 0$, $G_0^{iT}\left(\frac{1+\alpha\lambda}{1+\lambda}\right), G_1^{iT}\left(\frac{1+\alpha\lambda}{1+\lambda}\right) \to 1$. Therefore $\text{mFDR}(\delta^B(\lambda, T)) \to 1-p > \alpha$. As $\lambda \to +\infty$, $G_0^{iT}\left(\frac{1+\alpha\lambda}{1+\lambda}\right) \to G_0^{iT}(\alpha), G_1^{iT}\left(\frac{1+\alpha\lambda}{1+\lambda}\right) \to G_1^{iT}(\alpha)$. Therefore $\text{mFDR}(\delta^B(\lambda, T)) \to \frac{(1-p)G_0^{iT}(\alpha)}{(1-p) G_0^{iT}(\alpha)+ pG_1^{iT}(\alpha)} < \alpha$ by assumption. By continuity of $\text{mFDR}(\delta^B(\lambda, T))$, there exists $\lambda^*>0$ such that $\text{mFDR}(\delta^B(\lambda^*, T)) = \alpha$. 
\end{proof}

\subsection{Proof of Proposition \ref{prop-optimality-condition}}
\begin{proof}
Notice that for the simple truncation decision rule $\delta^T = (\delta_1^T,\ldots, \delta_m^T)$, where $\delta_i^T = 1\{T_{OR}^{iT} \leq \alpha\}$, 
\begin{equation*}
    \text{mFDR}(\delta^T) = \frac{\mathbb{E}\left[\sum_{i=1}^m (1-\theta_i)\delta^T_i\right]}{\mathbb{E}\left[\left(\sum_{i=1}^m \delta_i^T \right) \vee 1\right]} = \frac{(1-p)G_0^{iT}(\alpha)}{(1-p)G_0^{iT}(\alpha) + p G_1^{iT}(\alpha)}.
\end{equation*}
For the numerator, 
\begin{align*}
    \mathbb{E}\left[\sum_{i=1}^m (1-\theta_i)\delta^T_i\right]
    &= \mathbb{E}\left[\sum_{i=1}^m \mathbb{E}\left[(1-\theta_i)\delta_i^T \, \Big\vert \, x_{i,1:T}\right]\right]\\
    &= \mathbb{E}\left[\sum_{i=1}^m \delta_i^T \mathbb{E}\left[1-\theta_i \, \Big\vert \, x_{i,1:T}\right]\right]\\
    &= \mathbb{E}\left[\sum_{i=1}^m \delta_i^T T_{OR}^{iT}\right] \leq \alpha \mathbb{E}\left[\sum_{i=1}^m \delta_i^T\right],
\end{align*}
where the last inequality is due to definition of the truncation rule. Therefore $\text{mFDR}(\delta^T) \leq \alpha$. 
\end{proof}

\subsection{Optimality of adaptive procedure}
\label{appendix-adaptive-optimality}
\begin{lemma}
\label{lem-optimality}
Under the same data generating process and conditions in Theorem \ref{thm-optimality}. For any time stage $T$, conditioning on the active set $\mathcal{A}_T \in \mathcal{F}_{T-1}$, there exists a $\lambda^* > 0$ such that $\delta^B(\lambda^*, T)$ is the stage-wise optimal test in the sense that it solves the constraint optimization problem 
\begin{align}
\begin{split}
\label{optimization-adaptive}
    \min_{\{\delta_i\}_{i \in \mathcal{A}_T}} & \quad   \mathbb{E}\left[\sum_{i\in \mathcal{A}_T}\theta_i (1-\delta_i)\right]\\
    \textrm{s.t.} & \quad \text{mFDR}(\delta)= \frac{\mathbb{E}\left[\sum_{i\in \mathcal{A}_T}(1-\theta_i)\delta_i\right]}{\mathbb{E}\left[\left(\sum_{i\in \mathcal{A}_T} \delta_i \right)\vee 1\right]} \leq \alpha.
\end{split}
\end{align}
Moreover, if we let $\delta^T = (\delta_1^T,\ldots, \delta_m^T)$ where $\delta_i^T = 1$ for $i \notin \mathcal{A}_T$ and $\delta_{\mathcal{A}_T^c}^T = \delta^B(\lambda^*, T)$, then we also have global mFDR control at stage $T$, i.e. $mFDR(\delta^T) \leq \alpha$.
\end{lemma}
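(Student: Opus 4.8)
The plan is to reduce both claims to the fixed-horizon results already established, the only genuinely new ingredient being a careful treatment of the conditioning on the active set $\mathcal{A}_T$. For the stage-wise optimality claim, I would first observe that, because the coordinates are mutually independent and the event $\{i \in \mathcal{A}_T\}$ is $\mathcal{F}_{T-1}$-measurable (hence a deterministic function of the observed data), conditioning on $\mathcal{A}_T$ does not disturb the lfdr interpretation of the oracle statistic: for every $i \in \mathcal{A}_T$ one still has $\mathbb{E}[\,1-\theta_i \mid x_{i,1:T}\,] = T_{OR}^{iT}$, since the survival event is measurable with respect to $x_{i,1:T-1}$ together with the independent data of the remaining coordinates, and $x_{i,1:T-1} \subseteq x_{i,1:T}$. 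Consequently the conditional problem \eqref{optimization-adaptive} is structurally identical to \eqref{optimization}, with the index set $[m]$ replaced by $\mathcal{A}_T$.

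Given this reduction I would replay the Lagrangian argument of Theorem \ref{thm-optimality} verbatim on $\mathcal{A}_T$. Theorem \ref{classification} shows that $\delta^B(\lambda,T)$ minimizes the conditional classification risk, which is precisely the Lagrangian of \eqref{optimization-adaptive}; the contradiction argument then upgrades this to primal optimality once strong duality holds. Strong duality again reduces to producing a $\lambda^*>0$ with conditional $\text{mFDR}=\alpha$. Here I would track the two limits of the threshold $c_\lambda = (1+\alpha\lambda)/(1+\lambda)$: as $\lambda \to 0$ we have $c_\lambda \to 1$, every active coordinate is rejected, and the conditional mFDR tends to the null proportion among survivors, which is at least $1-p > \alpha$; as $\lambda \to \infty$ we have $c_\lambda \to \alpha$ and the conditional mFDR tends to that of simple truncation, which is strictly below $\alpha$ by the conditional analogue of \eqref{optimality-condition}. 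Continuity of $G_0^{iT}, G_1^{iT}$ and the intermediate value theorem then deliver the required $\lambda^*$.

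For the global control claim I would argue incrementally across stages. Writing $V_T = \sum_i (1-\theta_i)\delta_i^T$ and $R_T = \sum_i \delta_i^T$ for the cumulative false discoveries and rejections, and letting $\Delta V_s, \Delta R_s$ denote the contributions of the coordinates first rejected at stage $s$, I would use that a rejected coordinate leaves the active set permanently, so these increments are disjoint and $V_T = \sum_{s \leq T}\Delta V_s$, $R_T = \sum_{s \leq T}\Delta R_s$. Applying the first part at each stage $s \leq T$ (with its own optimal $\lambda_s^*$) gives the stage-wise equality $\mathbb{E}[\Delta V_s] = \alpha\,\mathbb{E}[\Delta R_s]$: the lfdr identity converts false discoveries into sums of $T_{OR}^{is}$, and the choice of $\lambda_s^*$ makes the resulting ratio exactly $\alpha$. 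Summing over $s$ yields $\mathbb{E}[V_T] = \alpha\,\mathbb{E}[R_T] \leq \alpha\,\mathbb{E}[R_T \vee 1]$, that is, $\text{mFDR}(\delta^T) \leq \alpha$.

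I expect the main obstacle to be the second limit in the duality step, namely confirming that the regularity behind \eqref{optimality-condition} — continuity of the statistic's cdfs and the strict inequality encoding conservativeness of simple truncation — persists after conditioning on survival, since selection into $\mathcal{A}_T$ reweights the coordinate distributions. The reassuring point is that Proposition \ref{prop-optimality-condition} rests only on the lfdr identity, which I have argued above is invariant under this conditioning, so its conditional version holds and supplies exactly the inequality needed to pin down $\lambda^*$.
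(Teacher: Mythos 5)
Your proposal follows essentially the same route as the paper's own proof: the stage-wise optimality claim is reduced to Theorem \ref{thm-optimality} applied with $m = |\mathcal{A}_T|$, and global control follows by decomposing the cumulative rejections into disjoint stage-wise increments (using that rejected coordinates leave the active set permanently) and summing the stage-wise mFDR bounds. If anything, your justification of why conditioning on $\mathcal{A}_T$ is harmless — independence across coordinates plus $\mathcal{F}_{T-1}$-measurability of the survival event preserving the lfdr identity $\mathbb{E}[1-\theta_i \mid \mathcal{F}^T] = T_{OR}^{iT}$ — is more explicit than the paper's proof, which invokes the fixed-horizon theorem without commenting on the selection effect at all.
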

\begin{proof}
Given an active set $\mathcal{A}_T$, the coordinates for test at stage $T$ are fixed and are sampled at all past time stages. Therefore by Theorem \ref{thm-optimality} with $m = |\mathcal{A}_T|$, we have directly the first part of the constraint optimization problem. 

For global mFDR control, notice that 
\begin{align*}
 \text{mFDR}(\delta^T) = \frac{\mathbb{E}\left[\sum_{i=1}^m (1-\theta_i)\delta^T_i\right]}{\mathbb{E}\left[\left(\sum_{i=1}^m \delta_i \right) \vee 1\right]} = \frac{\mathbb{E}\left[\sum_{t = 1}^T\sum_{i \in \mathcal{A}_t} (1-\theta_i)\delta_i^T\right]}{\mathbb{E}\left[\sum_{t=1}^T \left(\sum_{i\in \mathcal{A}_t} \delta_i\right) \vee 1\right]}.
\end{align*}
Given stagewise mFDR control
$\mathbb{E}\left[\sum_{i \in \mathcal{A}_t} (1-\theta_i)\delta_i^t\right] \leq \alpha \mathbb{E}\left[\sum_{i \in \mathcal{A}_t} \delta_i^t\right]$, we have global mFDR control. Since the choice of each $\lambda^*$ in all past time stages controls stagewise mFDR, we have global mFDR control. 
\end{proof}

\subsection{Proof of Theorem \ref{thm-oracle}}

\begin{proof}
To show validity of AMSET for mFDR control at any stopping time $\tau$, we borrow a similar idea from \cite{wang2017sparse}. Let $\mathbf{x}^{1:T} = \{x_{it}\}_{i \in \mathcal{A}_t, t \leq T}$ be all observations in past stages before time $T$ and $\mathcal{F}^T = \sigma(\mathbf{x}^{1:T})$. Recall the definition of mFDR in the sequential context for stopping time $\tau$. 
\begin{align*}
    \text{mFDR}(\tau) &= \frac{\mathbb{E}\left[\sum_{t=1}^\tau \sum_{i \in S_t} (1-\theta_i) \delta_i^t\right]}{\mathbb{E}\left[\left(\sum_{t=1}^\tau \sum_{i \in S_t} \delta_i^t\right) \vee 1\right]}.
\end{align*}
Note that the numerator can be written as, 
\begin{align*}
    \mathbb{E}\left[\sum_{t=1}^\tau \sum_{i \in S_t} (1-\theta_i) \delta_i^t\right] & = \mathbb{E}\left[\sum_{t=1}^\infty \sum_{i \in S_t} (1-\theta_i) \delta_i^t 1\{t \leq \tau\}\right]\\ 
    & = \sum_{t=1}^\infty \mathbb{E}\left[\sum_{i \in S_t} (1-\theta_i) \delta_i^t 1\{t \leq \tau\}\right]\\
    & \stackrel{(1)}{=} \sum_{t=1}^\infty  \mathbb{E}\left[\mathbb{E}\left[\sum_{i \in S_t} (1-\theta_i) \delta_i^t 1\{t \leq \tau\} \, \Big\vert \, \mathcal{F}^t\right]\right]\\
    & \stackrel{(2)}{=} \sum_{t=1}^\infty  \mathbb{E}\left[1\{t \leq \tau\}\mathbb{E} \left[\sum_{i \in S_t} (1-\theta_i) \delta_i^t \, \Big\vert \, \mathcal{F}^t\right]\right]\\
    & \stackrel{(3)}{=} \sum_{t=1}^\infty  \mathbb{E}\left[1\{t \leq \tau\}\sum_{i \in S_t} \delta_i^t\mathbb{E} \left[(1-\theta_i)  \, \Big\vert \, \mathcal{F}^t\right]\right]\\
    & \stackrel{(4)}{=} \sum_{t=1}^\infty  \mathbb{E}\left[1\{t \leq \tau\}\sum_{i \in S_t} \delta_i^t T_{OR}^{it} \right]\\
    &\stackrel{(5)}{\leq}  \alpha \sum_{t=1}^\infty  \mathbb{E}\left[1\{t \leq \tau\}\sum_{i \in S_t} \delta_i^t \right] = \alpha \mathbb{E}\left[\sum_{t=1}^\tau \sum_{i \in S_t} \delta_i^t\right],
\end{align*}
where (1) is by tower property, (2) and (3) are by the fact that both $1\{t\leq \tau\}$ and $\delta_i^t$ are $\mathcal{F}^t-$measurable, (4) is by definition of lfdr $T_{OR}^{it}$ in Eq. \eqref{eq-Tor}, and (5) is by compound thresholding property in Algorithm \ref{algo-oracle}. Combining the upper bound with the denominator of mFDR, we have that $\text{mFDR} \leq \alpha$ for any stopping time $\tau$. This result ensures validity up to continuous monitoring and therefore immunity to peeking.

Stagewise mFDR control is immediate by the same proof as above without summation over time stages. For stagewise exact FDR control, we have that for any time stage $t$
\begin{align*}
    \text{FDR}(\delta^t | S_t) &= \mathbb{E}\left[\frac{\sum_{i \in S_t} (1-\theta_i) \delta_i^t}{\left(\sum_{i \in S_t} \delta_i^t\right) \vee 1}\right] \\
    &= \mathbb{E}\left[\sum_{i \in S_t} \mathbb{E}\left[\frac{(1-\theta_i) \delta_i}{\left(\sum_{i \in S_t} \delta_i^t\right) \vee 1} \,\Big\vert\, \mathcal{F}^t\right]\right] \\
    &= \mathbb{E}\left[ \frac{\sum_{i \in S_t}\delta_i T_{OR}^{it}}{\left(\sum_{i \in S_t} \delta_i^t\right) \vee 1}\right] \leq \alpha \mathbb{E}\left[ \frac{\sum_{i \in S_t}\delta_i}{\left(\sum_{i \in S_t} \delta_i^t\right) \vee 1}\right] \leq \alpha.
\end{align*}
\end{proof}

\section{Additional procedures}
We provide details of some additional procedures in the paper, including non-adaptive version of AMSET called multistage empirical Bayes test (MSET), and Gaussian mixture deconvolution for estimation of alternative density. 

\subsection{MSET}
The non-adaptive MSET procedure is described in Algorithm \ref{algo-oracle-mset}.
\RestyleAlgo{ruled}
\begin{algorithm}[!h]
\caption{MSET}
\label{algo-mset}
Specify any stopping criterion\;
\For{time stages $t= 1, 2,\ldots$}{
(\textbf{Data Sampling}) Observe data $\{x_{it}\}_{i \in [m]}$\;
(\textbf{Ranking}) For all $i$, compute $T_{OR}^{it}$ in Eq \eqref{eq-Tor} based on $\{x_{is}\}_{s \leq t}$ and sort them in ascending order 
$$T_{OR}^{(1),t}, T_{OR}^{(2),t}, \ldots, T_{OR}^{(m),t}\;$$
(\textbf{Compound Thresholding}) Define threshold 
\begin{equation*}
r_t = \max\Big\{r: \frac{1}{r} \sum_{i= 1}^{r} T_{OR}^{(i),t} \leq \alpha\Big\}
\end{equation*}
 Let $S_{t} = \{i \in [m]: T_{OR}^{it} \leq T_{OR}^{(r_t), t}\}$ \;
(\textbf{Decision}) Let $\delta_i^t = 1$ for all $i \in S_t$ and $0$ otherwise \;
Break if stopping criterion is satisfied.
}
\end{algorithm}

\subsection{Gaussian mixture deconvolution}
\label{appen-deconvolve}
In Section \ref{sec-dd}, we described a data-driven method for AMSET, where we replace the oracle lfdr test statistic in Eq. \eqref{eq-Tor} with estimated population level parameters to form the data-driven test statistic in Eq. \eqref{eq-Tdd}. In particular for the estimation of non-null density, we will use Kiefer-Wolfowitz maximum likelihood estimation for Gaussian mixtures with software ``REBayes" develped by \cite{koenker2017rebayes} and computation supported by \cite{koenker2014convex}.

Recall that Gaussian mixture deconvolution gives us marginal density estimation as a mixture of Gaussians $\hat{G} = \sum_{i=1}^k \hat{p}_i \hat{G}_i$, where $\sum_{i=1}^k \hat{p}_i = 1$ and $\hat{G}_i = \mathcal{N}(\hat{\mu}_i, 1)$ for some $k$. There are various methods for recovering an estimate of $\hat{f_1}$ from deconvolution results. We list three possible methods below. 

\begin{itemize}
    \item Impose a hard threshold $c$ such that we only include $\hat{G}_i$'s such that $|\hat{\mu}_i| \geq c$ with their weighted corresponding proportions.
    \item Use the classification boundary in homescedastic Gaussian mixture model in \cite{donoho2004higher} as a sanity check for estimation of densities. That is, given an estimate of the proportion of non-nulls, we can find the smallest detectable effect from the classification boundary and consider only detectable effects as potential candidates for alternative density estimation.
    \item Estimate threshold to recover alternative density in a completely data-driven way. 
 That is, we first sort components of the Gaussian mixture estimate in ascending order of $|\hat{\mu}_i|$'s, resulting in $\hat{G}_{(1)}, \ldots, \hat{G}_{(k)}$ with their corresponding proportions $\hat{p}_{(1)}, \ldots, \hat{p}_{(k)}$. We can place a threshold $k_a = \max\{k:\sum_{i=1}^k \hat{p}_{(i)} \leq 1-\hat{p}\}$, where $\hat{p}$ is the proportion of non-nulls estimated by step (1) of data-driven implementation in Section \ref{sec-dd}. The alternative density can then be estimated by $\hat{f}_1 = \sum_{i = k_a}^k \hat{p}_{(i)}\hat{G}_{(i)}/\sum_{i = k_a}^k \hat{p}_{(i)}$. 
\end{itemize}

\section{Reproducibility}
We provide R implementations of all models described and the reproduction of all simulation study results at https://github.com/AMSET2022/amset. This supplementary section
details the steps to reproduce our results. 

\subsection{Code Organization}
All R codes for generating simulation results are included in the folder ``R". The csv files and plots for our simulations in fixed horizon comparison and stagewise comparison can be found in the ``data" folder. Since the package REBayes that provides Kiefer-Wolfowitz maximum likelihood estimation for mixture models deconvolution requires the installation of the optimization package Rmosek. We also provided links to user guide and resources for installing Rmosek. 

\subsection{Simulation code}
The details for simulation code are as follows. 
\begin{itemize}
    \item \verb|epsest.func.R| contains auxiliary function to estimate the proportion of non-null locations.
    \item \verb|MSET.R| implements fixed horizon comparison between MSET procedure and Optimizely's method using always valid p-values.
    \item \verb|MSET_stagewise.R| implements stagewise comparison between MSET procedure and Optimizely's method using always valid p-values.
    \item \verb|AMSET.R| implements fixed horizon comparison between AMSET procedure and Optimizely's method using always valid p-values. 
\end{itemize}

\end{document}